\newlist{sublist}{enumerate}{1}
\setlist[sublist,1]{label=(\alph*)}
\newtheorem{definition}{Definition}
\newtheorem{proposition}{Proposition}
\newtheorem{lemma}{Lemma}
\newtheorem{theorem}{Theorem}
\newtheorem{assumption}{Assumption}
\theoremstyle{remark} 
\newtheorem{comment}{Comment}
\def\rd#1{{\color{red}{#1}}}
\def\rte{{\textsc{rt}}}
\def\rtevb{\ensuremath{\eta_{\textsc{rt}}}}
\def\hvac{{\textsc{hvac}}}
\def\b{{(b)}}
\def\eqdef{\ensuremath{:=}}
\def\defeq{\ensuremath{=:}}
\newcommand{\whoami}{NR}
\ifdefstring{\whoami}{PB}{
\graphicspath{{/Users/pbarooah/Dropbox/papers/17_Raman_RTE/figures/}}%
}{
\graphicspath{{figures/}}
}
\def\pb#1{\footnote{\rd{pb:#1}}}
\def\pb#1{}
\def\nsr#1{\footnote{\rd{nsr:#1}}}
\def\nsr#1{}
\def\version{Arxiv}
\NewDocumentCommand{\setupcollaborator}{mm}
{
	\prop_new:c { g_collaborator_#1_prop }
	\pb_prop_gset_bykeys:cn { g_collaborator_#1_prop } { #2 }
}
\NewDocumentCommand{\selectcollaborator}{m}
{
	\prop_map_inline:cn { g_collaborator_#1_prop }
	{
		\tl_set:cn { ##1 } { ##2 }
	}
}
\begin{document}
	
\title{On the round-trip efficiency of an HVAC-based virtual battery}
	
\author{\IEEEauthorblockN{Naren Srivaths Raman and Prabir Barooah,~\IEEEmembership{Member,~IEEE}}
	\thanks{The research reported here has been partially supported by an NSF grant (award no. 1646229) and DOE grant (titled ``virtual batteries'') under the GMLC program.}
\thanks{N. Raman and P. Barooah are with the Mechanical and Aerospace Engineering Department, University of Florida, Gainesville, FL 32611 USA (e-mail: narensraman@ufl.edu).}}
	
\maketitle
	
\begin{abstract}
Flexible loads, especially heating, ventilation, and air-conditioning (HVAC) systems can be used to provide a battery-like service to the power grid by varying their demand up and down over a baseline. Recent work has reported that providing virtual energy storage with HVAC systems lead to a net loss of energy, akin to a low round-trip efficiency (RTE) of a battery. In this work we rigorously analyze the RTE of a virtual battery through a simplified physics-based model. We show that the low RTEs reported in recent experimental and simulation work are an artifact of the experimental/simulation setup. When the HVAC system is repeatedly used as a virtual battery, the asymptotic RTE is 1. Robustness of the result to assumptions made in the analysis is illustrated through a simulation case study.
\end{abstract}
	
\begin{IEEEkeywords}
Ancillary service, demand response, HVAC system, round-trip efficiency, virtual battery, virtual energy storage.
\end{IEEEkeywords}


\section{Introduction} 
There is a growing recognition that the
power demand of most electric loads is flexible, and this flexibility can be exploited to provide ancillary services to the grid by varying the demand up and down over a baseline~\cite{makarov2008assessing,linbarmeymid:2015}. To the grid they
appear to be providing the same service as a
battery~\cite{chsamwu:2017}. Such a load, or collection of loads, can therefore be called Virtual Energy Storage (VES) systems or virtual batteries.  

Consumers' quality of service (QoS) must be maintained by these virtual batteries. When heating, ventilation, and air-conditioning (HVAC) systems are used for VES, a key QoS measure is indoor
temperature. 
Another important QoS measure is the total energy consumption. Continuously varying the power consumption of
loads around a baseline may lead to a net reduction in the efficiency
of energy use, causing the load to consume more energy in the long
run. If so, that will be analogous to the virtual battery having a
round-trip efficiency (RTE) less than unity. Electrochemical batteries
also have a less-than-unity round-trip efficiency due to various losses~\cite{luowandoocla:2015}. 

The aim of this paper is to analyze the RTE of VES system comprised
of HVAC equipment in commercial buildings. The inspiration for this paper comes
from~\cite{beihisbac:2015} and its follow-on
work~\cite{linmatjohhisbac:2017}. To the best of our knowledge,  the
article~\cite{beihisbac:2015} is the first to provide experimental
data on the  round-trip efficiency of buildings providing virtual
energy storage from an experiment carried out at a building in the Los Alamos National Laboratory (LANL) campus. The average
RTE (over many tests) reported in~\cite{beihisbac:2015}  was less than $0.5$. These values are quite low compared
to that for electrochemical batteries, which vary
from 0.75 to 0.97 depending on the electrochemistry~\cite{luowandoocla:2015}.
If the RTE estimates in~\cite{beihisbac:2015} are representative, that bodes ill for the
use of building HVAC systems to be virtual batteries. 

This paper provides an analysis of the RTE of an HVAC based VES system using a simple physics-based model. We establish that the RTE in fact approaches 1 when the HVAC system is repeatedly used as a virtual battery for many cycles. The low RTE values seen in the LANL experiments was due to the fact that the experiment was run for one cycle.

\subsection{Literature review and statement of contribution}
In the experiments reported in \cite{beihisbac:2015},  fan power was varied in
an approximately square wave fashion with a time period of 30
minutes in a $\sim30,000$ m$^2$ building
in the LANL campus. After one cycle of the square wave, the climate
control system was re-activated to bring the building temperature back to
its baseline value. It was observed that the control system had
to expend a considerable amount of additional energy in the recovery
phase in almost all the tests performed. This loss was expressed as a
round-trip efficiency less than unity. In a small number of
tests, the RTE was observed to be greater than unity. The mean RTE
observed from all the tests was in the order of $0.5$.  

In the experiments reported in the article~\cite{linbarmeymid:2015}, fan power in a $\sim 3700$ m$^2$ building at the University of
Florida (UF) campus was varied to track Pennsylvania-New Jersey-Maryland's (PJM) RegD signal~\cite{PJMmanual12}. When the VES controller was turned off at the end
of the experiment, no large transient was observed in either power or temperature; see Figure 8 of~\cite{linbarmeymid:ACC:2015}. A more
recent paper that also presented results from experiments in a test building at Lawrence Berkeley National Laboratory (LBNL) in which HVAC fan power was varied to track RegD, observed similar behavior~\cite{Vrettos_etal_part2:2017}. In fact~\cite{Vrettos_etal_part2:2017} reported a slight \emph{decrease} in energy use compared to the
baseline. Unlike the LANL experiments, the UF and LBNL experiments involved higher frequency variation in the HVAC power, of time scales shorter than 10 minutes. 

The paper~\cite{linmatjohhisbac:2017} attempted to explain the
experimental observations in~\cite{beihisbac:2015} by conducting simulations. They also examined the effect of
several model parameters and sources of experimental uncertainty such
as imprecise knowledge of baseline power consumption. They were able
to replicate several trends observed in the LANL experiments, but there were also significant differences.  

The purpose of this paper is to rigorously analyze the RTE of a VES
system that is based on commercial-building HVAC equipment and to
determine factors that affect the RTE. In that sense, our goal is
similar to that of \cite{linmatjohhisbac:2017}. In contrast to
\cite{linmatjohhisbac:2017}, which explored the effect of many factors
on the RTE by simulation alone, we focus on deriving results for a
limited set of conditions for which provable results can be provided. Following
\cite{linmatjohhisbac:2017}, we also use a simplified physics-based model
of a building's temperature dynamics and power consumption instead of
using a simulation software so that rigorous analysis is possible.

This paper makes two main contributions to the nascent literature on
the RTE of HVAC-based virtual batteries. The first contribution is to
show that the RTE values much smaller than unity that were reported in
prior work were an artifact of the experimental/simulation set up. In
particular, the HVAC system underwent only ``one demand-response
event'' in~\cite{beihisbac:2015}, i.e., one period of a square-wave
power variation. The simulation study~\cite{linmatjohhisbac:2017} also
focused on that situation and observed similar values of the RTE. It
did explore multiple demand-response events, in which the RTE was
found to be close to 1. These events, however, were chosen in a
particular manner that are unlikely to occur in practice. We focus on a general case in which an HVAC-based virtual battery undergoes $n$ repeated cycles of a square-wave power variation. 
We  show through rigorous analysis that the RTE approaches $1$ as $n \to \infty$. When $n$ is small, especially when $n=1$, we show that the RTE can indeed be larger or smaller than 1 depending on the time period of the reference signal.  

Second, we explicitly define terms and concepts that are standard for electrochemical batteries, such as ``state of charge'', but not yet for HVAC-based virtual batteries. Some of these terms were used---even implicitly defined---in prior work~\cite{beihisbac:2015,linmatjohhisbac:2017}. We believe future studies on RTE of virtual batteries will benefit from the definitions proposed here.

The rest of the paper is organized as follows. Section \ref{sec:defs} describes the terminology and definitions needed for the sequel. Section \ref{sec:model} describes the HVAC system model used. Section \ref{sec:zero-mean-powerref} provides analysis of RTE, and Section \ref{sec:simulations} provides a numerical case study that demonstrates robustness of the analysis to the assumptions. Section \ref{sec:conclusion} summarizes the conclusions.

\section{Definitions and other preliminaries}\label{sec:defs}
\subsection{Round-trip efficiency of an electrochemical battery}
The state of charge (SoC) of a battery, which we denote by $S_B(t)$, is
defined as \cite{ng:2009enhanced}
\begin{linenomath*}
	\begin{align}
	\label{eq:SoC-bat-integral}
	S_B(t) = S_B(0) + \frac{1}{Q_0}\int_{0}^t I_B(t)dt,
	\end{align}
\end{linenomath*}
where $I_B(t)$ is the current drawn by the battery (positive during charging and negative during
discharging) and $Q_0$ is its maximum charge (in Coulomb). The SoC is a number between 0 and 1. It is more convenient to use power drawn (or discharged) instead of current in~\eqref{eq:SoC-bat-integral}. For simplicity, we assume the voltage across the battery is constant, $V_0$, so the power drawn by the battery from the grid is $P_B(t) \eqdef V_0I_B(t)$. Eq.~\eqref{eq:SoC-bat-integral} then becomes $ S_B(t) = S_B(0) + \frac{1}{Q_0V_0}\int_{0}^t P_B(t)dt$. Differentiating, we get
\begin{linenomath*}
\begin{align}
  \label{eq:SoC-bat-diff-form}
  C_0\dot{S}_B(t) = P_B(t)
\end{align}
\end{linenomath*}
where $C_0   = Q_0V_0$.

\begin{definition}[Complete charge-discharge]\label{def:cd}
We say a battery has undergone a complete  charge-discharge during a
time interval $[t_i, \; t_f]$ if $SoC(t_i) =SoC(t_f)$. The time
interval  $[t_i, \; t_f]$ is called a complete charge-discharge interval.
\end{definition}
The qualifier ``complete'' does not mean the SoC reaches 1 or 0. It
only means the SoC comes back to where it started from.

\begin{definition}[RTE]\label{def:rte}
Suppose a battery undergoes a complete
charge-discharge over a time interval $[0, t_{cd}]$. Let $t_c$ be the length of time
during which the battery is charging and $t_d$ be the length of time
during which the battery is discharging so that $t_c+t_d=t_{cd}$. The
round-trip efficiency (RTE) of the battery, denoted by $\eta_\rte$,
during this interval is
\begin{linenomath*}
\begin{align}
\label{eq:rte-batt}
\eta_\rte \triangleq \frac{E_d}{E_c} = \frac{-\int_{t_d}P_{B}(t)dt}{\int_{t_c}P_{B}(t)dt},
\end{align}
\end{linenomath*}
where $E_d$ is the energy released by the battery to the grid during discharging, $E_c$ is the energy consumed by the battery from the grid during charging, and $\int_{t_c}$ (resp., $\int_{t_d}$) denotes integration performed over the charging times (resp., discharging times).  
\end{definition}
Notice that by convention $P_B(t)<0$ means the battery is releasing power to the grid at
time instant $t$. In general, the RTE depend on many factors including
how a particular SoC was achieved~\cite{sprenkle:2017life}. For simplicity, we ignore those effects and use \eqref{eq:rte-batt} to define \emph{the} RTE of the battery.

\subsection{Round-trip efficiency of an HVAC-based VES system}

We now consider an HVAC system whose power demand is artificially varied from its baseline demand to provide virtual energy storage. \emph{The power consumption of the virtual battery, $\tilde{P}$, is defined as the deviation of the electrical power consumption of the HVAC system from the baseline power consumption: 
  \begin{align}\label{eq:def-P-vb}
    \tilde{P}(t) := P_\hvac(t) - P_\hvac^\b(t),
  \end{align}
where $P_\hvac^\b$ is the \emph{baseline} power consumption of the HVAC system, defined as the power the HVAC system needs to consume to maintain a baseline indoor temperature $T^\b$.}

To make a connection between a real battery and a virtual battery, consider the simple dynamic model of a building's temperature:
\begin{linenomath*}
\begin{align}
\label{eq:T-dynamics-basic}
C\dot{T}(t) = q(t),
\end{align}
\end{linenomath*}
where $C$ is the heat capacity of the building (J/K) and $q$ is the
net heat influx rate (J/s), which is the combined effect of heat gain
of the building from solar, outdoor weather, occupants, and the HVAC
system.  Comparing \eqref{eq:T-dynamics-basic} and
\eqref{eq:SoC-bat-diff-form}, we see that indoor temperature, $T(t)$,
and the SoC of an electrochemical battery, $S_B$, are analogous. Just
as the SoC of a real battery must be kept between 0 and 1, the
temperature of a building must be kept between a minimum value,
denoted by  $T_L$ (low), and a maximum value, denoted by $T_H$ (high), to
ensure QoS. We therefore define the SoC of an HVAC-based virtual
battery  as follows.
\begin{definition}[SoC of a VES system]\label{def:soc-vb}
The SoC of an HVAC-based VES system with indoor temperature $T$
is the ratio $\dfrac{T_H-T}{T_H-T_L}$ where $[T_L, \; T_H]$ is the allowable range of indoor temperature.
\end{definition}

The definition of a complete charge-discharge interval of a virtual
battery is the same as that for a battery: Definition~\ref{def:cd}, with SoC as defined in Definition~\ref{def:soc-vb}. The round-trip efficiency of the virtual battery, denoted as $\rtevb$, is also the same as that of a battery (Definition~\ref{def:rte}), with power consumption of the battery, $P_B(t)$, replaced by power consumption of the virtual battery, $\tilde P(t)$. Thus, 
\begin{linenomath*}
\begin{align}
\label{eq:rte-def-vb}
\rtevb= \frac{-\int_{t_d} \tilde P(t)dt}{\int_{t_c} \tilde P(t)dt} =\frac{-\int_{t_d}(P_\hvac(t)-P_\hvac^\b(t))dt}{\int_{t_c}(P_\hvac(t)-P_\hvac^\b(t))dt}
\end{align}
\end{linenomath*}

\ifArxivVersion
\subsection{Charging vs. change in SoC}
A comment on the implication of Definition \ref{def:soc-vb} is in order. Whether an increase in SoC is accompanied by an increase in the VES
system's power consumption depends on the baseline
condition. Imagine the scenario when the HVAC system provides net
cooling.  When the VES system charges, i.e., $\tilde{P}>0$,
additional cooling is provided to the building (see
\eqref{eq:def-P-vb}), and the temperature decreases over baseline,
thereby increasing the SoC according to the definition
above. Similarly, when it discharges, i.e., $\tilde{P}<0$, less
cooling is provided and temperature increases over baseline,
thereby lowering the SoC.  If the HVAC system is in the
heating mode, the opposite occurs. Charging ($\tilde{P}>0$) means
more heating, increase in temperature and therefore lowering of the SoC, and vice versa for discharging. Although that may appear contrary to intuition
based on electrochemical batteries, we believe it is sensible since
charging (resp., discharging) of a battery, real or virtual, should
correspond to positive (resp., negative) power draw from the grid
since the grid operator needs to use the same language in
communicating with all batteries. SoC, on the other hand, is a local
concern that only affects the battery operator, and distinct notions
of SoC for distinct types of batteries are not unreasonable. 
\fi

\section{Model of an HVAC-based VES system}\label{sec:model}
Figure~\ref{fig:AHU-chiller-schematic} shows the idealized
variable-air-volume (VAV) HVAC system
under study. The only devices that consume significant amount of
electricity are the supply air fan and the chiller. The energy consumed by the chilled water pump
motors is assumed to be negligible.

\begin{figure}[ht]
  \centering
  \includegraphics[width=1\linewidth]{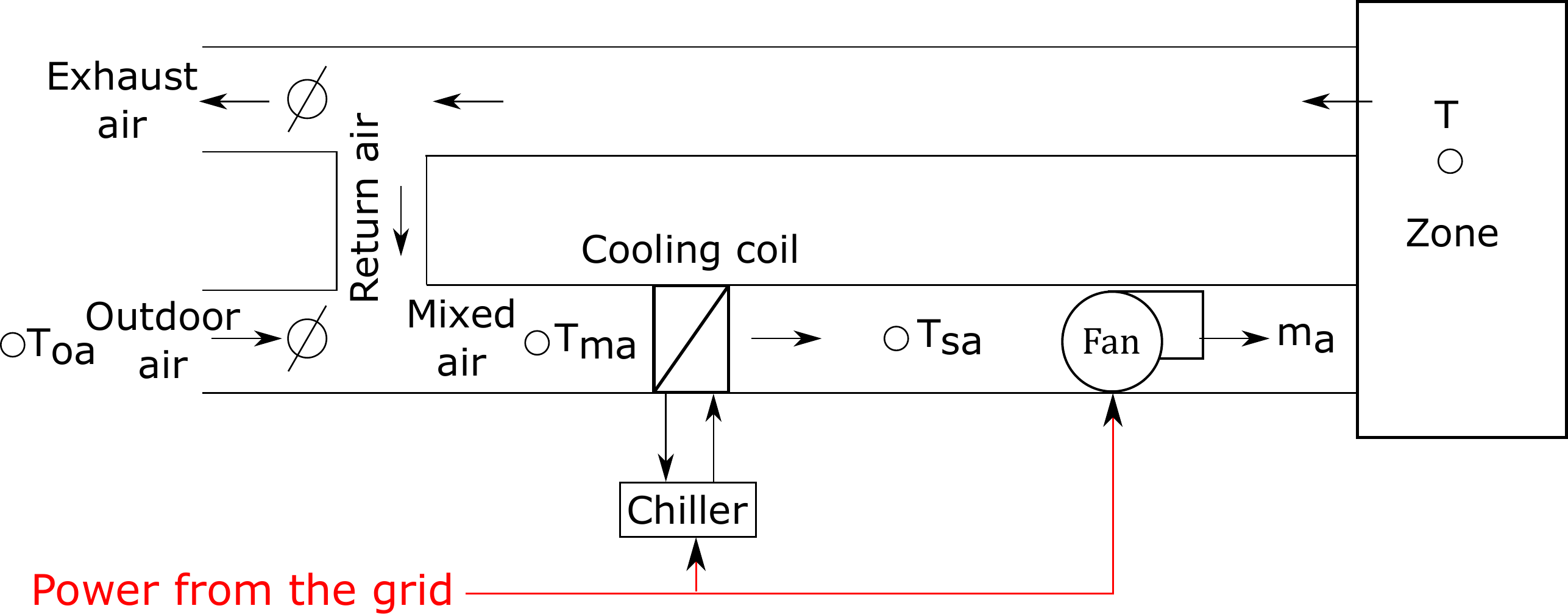}
  \caption{Simplified schematic of a commercial variable-air-volume
    HVAC system.}
  \label{fig:AHU-chiller-schematic}
\end{figure}

In the sequel, $m_a$ denotes the air flow rate\footnote{Customarily air flow rate is denoted by
  $\dot{m}$. Since the notation $\dot{x}$ is used for state
  derivatives (as in $\dot{x} = f(x,u)$), whereas air flow rate is an
  input ($u$) and not a state ($x$), we avoid the ``dot'' notation for
  air flow rate.}. Under baseline conditions, a climate control system
determines the set point for the airflow rate, and the fan speed is
varied to maintain that set point. 

The HVAC system is converted to a VES system with the help of an
additional control system, which we denote by  ``VES controller''. The VES controller modifies the set point of the air flow rate (that is otherwise decided by the climate control system) so that the power consumption of the virtual battery, $\tilde{P}(t)$, tracks an exogenous reference signal, $\tilde{P}^r(t)$. 
We assume that the VES controller is perfect; it can determine the variation in airflow required to track a power deviation reference exactly. \ifArxivVersion Figure \ref{fig:VEScontroller} illustrates the action of the VES controller. When the HVAC system is not providing VES service, the VES controller is turned off: $\tilde{P}(t)\equiv0$. In other words, the building is under baseline operation.  
 
\begin{figure}[ht]
	\centering
	\includegraphics[width=0.55\linewidth]{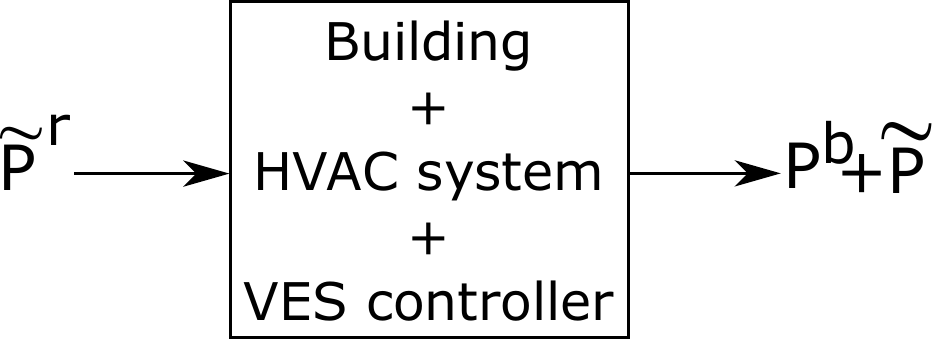}
	\caption{VES system; we assume that the VES controller provides perfect tracking so that $\tilde P(t)$ tracks $\tilde P^r(t)$.}
	\label{fig:VEScontroller}
\end{figure}
\fi

\subsection{Thermal dynamics of HVAC-based VES}
A commonly used modeling paradigm for dynamics of temperature  is resistor-capacitor (RC)
networks~\cite{kramer:2012simplified}. The following simple RC network model is used to model the
temperature of the zone serviced by the HVAC system:
\begin{linenomath*}
\begin{align}\label{eq:T}
C\dot{T}(t) &=  \frac{1}{R}(T_{oa}(t) - T(t)) + q_x(t)+q_\hvac(t),
\end{align}
\end{linenomath*}
where $R$ is the building structure's resistance to heat exchange
between indoors and outdoors, $C$ is the thermal capacitance of the
building, $T_{oa}$ is the outdoor air temperature, $q_x$ is the exogenous heat influx into the building, and
$q_\hvac$ is the heat influx due to the HVAC system, which is due to the
temperature of the air supplied to the building and the air removed from
the zone:
\begin{align}
  \label{eq:q_ac}
  q_\hvac(t) & = m_{a}(t)C_{pa}[T_{sa}(t) - T(t)],
\end{align}
where $m_a$ is the supply air flow rate, $C_{pa}$ is the specific heat capacity of air at constant pressure, $T_{sa}$ is the temperature of the supply air, and $T$ is the temperature of the air leaving the zone. Some of the air leaving the zone is recirculated while some exit the building; see Figure \ref{fig:AHU-chiller-schematic}. Although much more complex models are possible, the simplified model~\eqref{eq:T} aids analysis. Furthermore, it is argued in~\cite{Fux:2014} that a first-order RC network model---such as \eqref{eq:T}---is adequate for prediction up to a few days.

\subsection{HVAC power consumption model}
The power consumption of the HVAC system is a sum of the fan power and
chiller power: $P_\hvac(t)=P_{fan}(t)+P_{ch}(t)$. 

We model the fan power consumption as:
\begin{align}\label{eq:fanpower}
P_f(t) = \alpha_{1f} m_a^2(t)+\alpha_{2f}m_a(t), 
\end{align}
where $\alpha_{1f}(>0)$ and $\alpha_{2f}$ are coefficients that depend on
the fan. Variable speed air supply fan power models reported in the
literature are typically cubic~\cite{rouheiforpib:2001}. We use a quadratic
model for two main reasons. One is ease of analysis, which will be
utilized in Section~\ref{sec:zero-mean-powerref}. The other is that a quadratic
model is adequate to fit measured data, which we will show in
Section~\ref{ssec:single-period}. Note that $\alpha_{2f}$ is allowed to
be negative to better fit measurements, though $P_f$ is always
non-negative for the range of airflows in which we consider the VES system to
be operating. 

Electrical power consumption by the chiller, $P_{ch}$, is modeled as
being proportional to the heat it extracts from the mixed air stream
that passes through the evaporator (or the cooling coil in a chilled
water system): 
\begin{align}\label{eq:Pch-1}
P_{ch}(t) = \frac{m_a(t)[h_{ma}(t)- h_{sa}(t)]}{COP}, 
\end{align}
where $COP$ is the coefficient
of performance of the chiller, $h(\cdot)$ is specific enthalpy of air, and the subscripts $ma$ and $sa$
stand for ``mixed air'' and ``supply air''; see
Figure~\ref{fig:AHU-chiller-schematic}. Since a part of the return air is mixed with the outside air, the specific enthalpy of the mixed air is:
\begin{align}\label{eq:Tma}
	h_{ma}(t)=r_{oa}(t) h_{oa}(t)+(1-r_{oa}(t))h(t),
\end{align}
where $r_{oa}$ is the
so-called outside air ratio: $r_{oa}\eqdef \frac{m_{oa}}{m_{a}}$, $h_{oa}$ is the specific enthalpy of outdoor air, and $h$ is the specific enthalpy of the air leaving the zone. The specific enthalpy of moist air with temperature $T$ and humidity ratio $W$ is given by:
	$h(T,W) = C_{pa}T + W(g_{H_20}+C_{pw}T)$,
where $g_{H_20}$ is the heat of evaporation of water at 0\degree C, and $C_{pa},C_{pw}$ are specific heat of air and water at constant pressure. We assume the following throughout the paper to simplify the analysis:    
\begin{assumption} \label{as:basic}
(i) The ambient temperature ($T_{oa}$), the exogenous heat gain ($q_x$), and the coefficient of performance of the chiller (COP) are constants. (ii) The ambient is warmer than the maximum allowable indoor
          temperature: $T_{oa}>T_H$, so the HVAC system only provides cooling. 
(iii) Effect of humidity change is ignored so that the
          specific enthalpy of an air stream, $h$, with
          (dry-bulb) temperature $T$ is $h=C_{pa}T$, where $C_{pa}$ is
          the specific heat capacity of dry air.
	(iv) The supply air temperature, $T_{sa}$, is constant, and $T_{sa}< T^\b$. 
\end{assumption}
The first three are taken for the ease of analysis. The fourth usually
holds in practice because the cooling coil control loop maintains
$T_{sa}$ at a constant set point, which is lower than indoor
temperature in cooling applications.

With Assumption~\ref{as:basic}, \eqref{eq:fanpower}, \eqref{eq:Pch-1}, and
\eqref{eq:Tma} yield
\begin{align}
	P_\hvac(m_a,T)&=\alpha_{1f}(m_a)^2+\alpha_{2f}m_a \notag \\
				 &+\frac{m_aC_{pa}[r_{oa}T_{oa}+(1-r_{oa})T-T_{sa}]}{COP}.\label{eq:P-hvac-final}
\end{align}
Similarly, the temperature dynamics~\eqref{eq:T} and \eqref{eq:q_ac} become	
\begin{linenomath*}
	\begin{align}\label{eq:T-dynamics-final}
		\dot{T}= \frac{1}{RC}(T_{oa} - T)+ \frac{1}{C}q_x  +\frac{1}{C}m_aC_{pa}(T_{sa}-T).
	\end{align}    
	\end{linenomath*}

        \begin{definition}[Baseline]\label{def:baseline}
          Baseline corresponds to an equilibrium condition in which zone
          temperature and air flow rate are held at constant values, denoted by   $T^\b$ and
          $m_a^\b$.
        \end{definition}
        It follows from Definition~\ref{def:baseline} and
        \eqref{eq:T-dynamics-final} that the baseline variables  $T^\b$ and
          $m_a^\b$ must satisfy
	\begin{linenomath*}
	\begin{align}\label{eq:T-eqlbm-eq}
		0 = \frac{1}{R}(T_{oa}-T^\b) + q_x + m_a^\b C_{pa}(T_{sa}-T^\b).
	\end{align}
	\end{linenomath*}
The \emph{baseline power consumption}, $P_\hvac^\b$, is obtained by plugging in $T^\b$ and $m_a^\b$
into the expression for $P_\hvac$ in \eqref{eq:P-hvac-final}.

The baseline temperature is best thought of as the setpoint that the climate controller uses, and can be any temperature that is strictly inside the allowable interval, meaning $T_L < T^{(b)} < T_H$. Since some variation of the temperature around the setpoint is inevitable due to imperfect reference tracking by a climate controller, the setpoint is always chosen to be inside the allowable limits. The requirement $T_L < T^{(b)} < T_H$ is consistent with this practice. 
 
\subsection{VES system dynamics and power consumption} \label{ssec:diff-algb}
Now we will derive the expressions for the VES system dynamics and power consumption which will be used in the subsequent analysis presented in Section~\ref{sec:zero-mean-powerref}. Let $\tilde m_a(t)$ be the airflow rate deviation (from the baseline)
commanded by the VES controller. Note that $m_a(t) = m_a^\b+ \tilde
m_a(t)$. Let the resulting deviation in the zone temperature be
\begin{align}\label{eq:def-deltaT}
\tilde T(t) \eqdef T(t)  - T^\b . 
\end{align} 
The power consumption by the virtual battery is:
\begin{align}\label{eq:Ptilde-def}
 \tilde P(t) \eqdef & P_\hvac(m_a(t),T(t))- P_\hvac^\b(m_a^\b,T^\b),
\end{align}
where $P_\hvac(\cdot,\cdot)$ is given by \eqref{eq:P-hvac-final}.

By expanding \eqref{eq:Ptilde-def}, we obtain:
	\begin{linenomath*}
	\begin{align}\label{eq:deltaP_nonlinear}
 \tilde P =a\tilde m_a+b\tilde T+c\tilde m_a \tilde T+d\tilde m_a^2,
	\end{align}
	\end{linenomath*}
where the constants $a,b,c,$ and $d$ are:
	\begin{linenomath*}
	\begin{align} 
	a&\eqdef 2\alpha_{1f}m_a^\b +\alpha_{2f} +\frac{C_{pa}[r_{oa}T_{oa}+(1-r_{oa})T^\b-T_{sa}]}{COP}, \label{eq:deltaP_nonlinear_a}\\ 
	b&\eqdef \frac{C_{pa}m_a^\b(1-r_{oa})}{COP},
	c\eqdef \frac{C_{pa}(1-r_{oa})}{COP}, 
	d\eqdef \alpha_{1f}. \label{eq:deltaP_nonlinear_bcd}
	\end{align}
	\end{linenomath*}
Differentiating \eqref{eq:def-deltaT}, and using
\eqref{eq:T-dynamics-final} and \eqref{eq:T-eqlbm-eq} 
we obtain:
	\begin{linenomath*}
	\begin{align}
		\dot{\tilde T}&=-\alpha\tilde T-\beta\tilde m_a-\gamma\tilde T\tilde m_a, \quad \text{ 	where } \label{eq:deltaT-dynamics}\\
          \alpha & \eqdef \frac{RC_{pa}m_a^\b+1}{RC}, 
\beta  \eqdef \frac{C_{pa}(T^\b-T_{sa})}{C}, 
          \gamma \eqdef \frac{C_{pa}}{C}. \label{eq:def-alpha-beta-gamma}
        \end{align}
	\end{linenomath*}
 The dynamics of the temperature deviation (and therefore of the SoC of
 the virtual battery, cf. Definition~\ref{def:soc-vb}) are thus a differential
algebraic equation (DAE):
$   \dot{\tilde T} = f(\tilde T,\tilde m_a)$ , $  \tilde P = g(\tilde T,\tilde m_a)$, where the first (differential) equation is given by \eqref{eq:deltaT-dynamics} and the second (algebraic) equation is given by \eqref{eq:deltaP_nonlinear}. 

\section{Analysis}\label{sec:zero-mean-powerref}
In this paper we restrict the power consumption of the virtual battery to a square-wave signal. There are three reasons for this choice. One, it enables comparison with prior work~\cite{beihisbac:2015,linmatjohhisbac:2017}. Two, it aids the analysis of temperature dynamics. Three, an arbitrary square-integrable signal can be approximated by a combination of square waves using the Haar wavelet transform~\cite{mallat:2008wavelet}.

Let the amplitude of the power consumption $\tilde P(t)$ be $\Delta P$ and the
half-period be $t_p$ (so that the period is $2t_p$). For half of the
period, $\tilde P(t)=\Delta P$, and for the other half, $\tilde
P(t)=-\Delta P$. Consider a complete charge-discharge interval of the
VES system, $[0, \tau]$, so that $SoC(0) = SoC(\tau)$; cf. Definition~\ref{def:cd}. Let
$t_c$ be the total length of the time intervals during which the VES was
charging, i.e., the value of $\tilde{P}(t)$ is $\Delta P$ at any $t$
in those intervals. Similarly, let $t_d$ be the total length of the
time intervals during which the VES was discharging, i.e., the value
of $\tilde{P}(t)$ is $-\Delta P$ at any $t$ in those intervals. Note that $t_c +
t_d = \tau$. It follows from \eqref{eq:rte-def-vb} that
\begin{linenomath*}
\begin{align}\label{eq:rte-simple}
	\eta_\rte =	\frac{-\int_{t_d}[-\Delta{P}]dt}{\int_{t_c}[\Delta{P}]dt}=\frac{t_d}{t_c}.
\end{align}
\end{linenomath*}
The RTE will therefore be either larger or smaller than one depending
on whether $t_d \geq t_c$ or vice versa. The formula~\eqref{eq:rte-simple} will be used in
the subsequent analysis.

Since~\cite{beihisbac:2015} reported differences in observed RTE depending on whether the power consumption is first increased and then decreased from the baseline (``up/down'' scenario), or vice versa (``down/up'' scenario), we treat them separately.

\subsection{A single period of square-wave power consumption} \label{ssec:single-period}
In this section we consider a single period of square-wave power deviation signal. In the ``up/down'' scenario, there are two possibilities for the temperature deviation. The first possibility, which is shown in Figure~\ref{fig:charge-then-discharge-cases-combined}, is that the temperature deviation $\tilde{T}$ is above 0 at the end of one period of the square wave. This means additional charging is needed to bring $\tilde T$ to 0 or alternatively to bring the SoC back to its starting value, which makes the time interval $[0,2t_p+t_{recov1}]$ a complete charge-discharge interval according to Definition~\ref{def:cd}. The RTE computed over this interval using Definition \ref{def:rte} or equivalently \eqref{eq:rte-simple} is called \emph{the RTE for one cycle}. So for the first possibility $t_c = t_p + t_{recov1}$ and $t_d=t_p$, and \eqref{eq:rte-simple} tells us that $\eta_\rte < 1$. The second possibility is that the temperature deviation $\tilde{T}$ is below 0 at the end of one period of the square wave. This means additional discharging is needed to bring  $\tilde{T}$ to 0, which makes the time interval $[0,2t_p+t_{recov2}]$ a complete charge-discharge interval.  Therefore, $t_c = t_p$ and $t_d=t_p+t_{recov2}$, and \eqref{eq:rte-simple} tells us that $\eta_\rte > 1$.

\begin{figure}[h]
	\centering
	\includegraphics[width=0.75\linewidth]{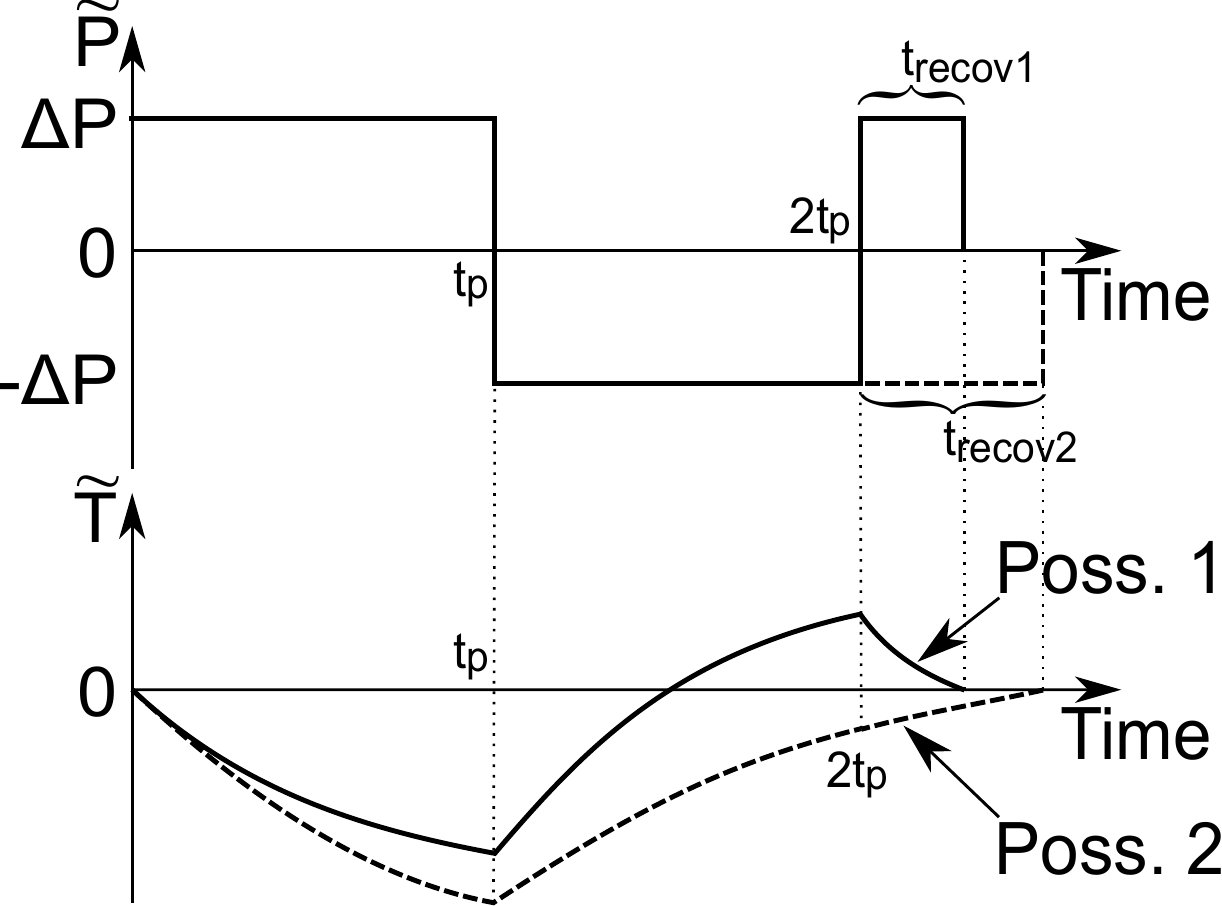}
	\caption{Up/down scenario, possibility 1: since $\tilde T(2t_p)>0$ additional charging is needed to bring back $\tilde T$ to its initial value(=0). Possibility 2: since $\tilde T(2t_p)<0$ additional discharging is needed to bring back $\tilde T$ to its initial value(=0).}\label{fig:charge-then-discharge-cases-combined}
\end{figure}

The situation in the ``down/up'' scenario is similar. The RTE will be smaller or larger than 1 depending on whether the temperature deviation in the first half period is larger or smaller (in magnitude) than that in the second half period. 
\ifArxivVersion
These two possibilities are shown in Figure~\ref{fig:discharge-then-charge-cases-combined}.
\fi

\ifArxivVersion
\begin{figure}[h]
	\centering
	\includegraphics[width=0.75\linewidth]{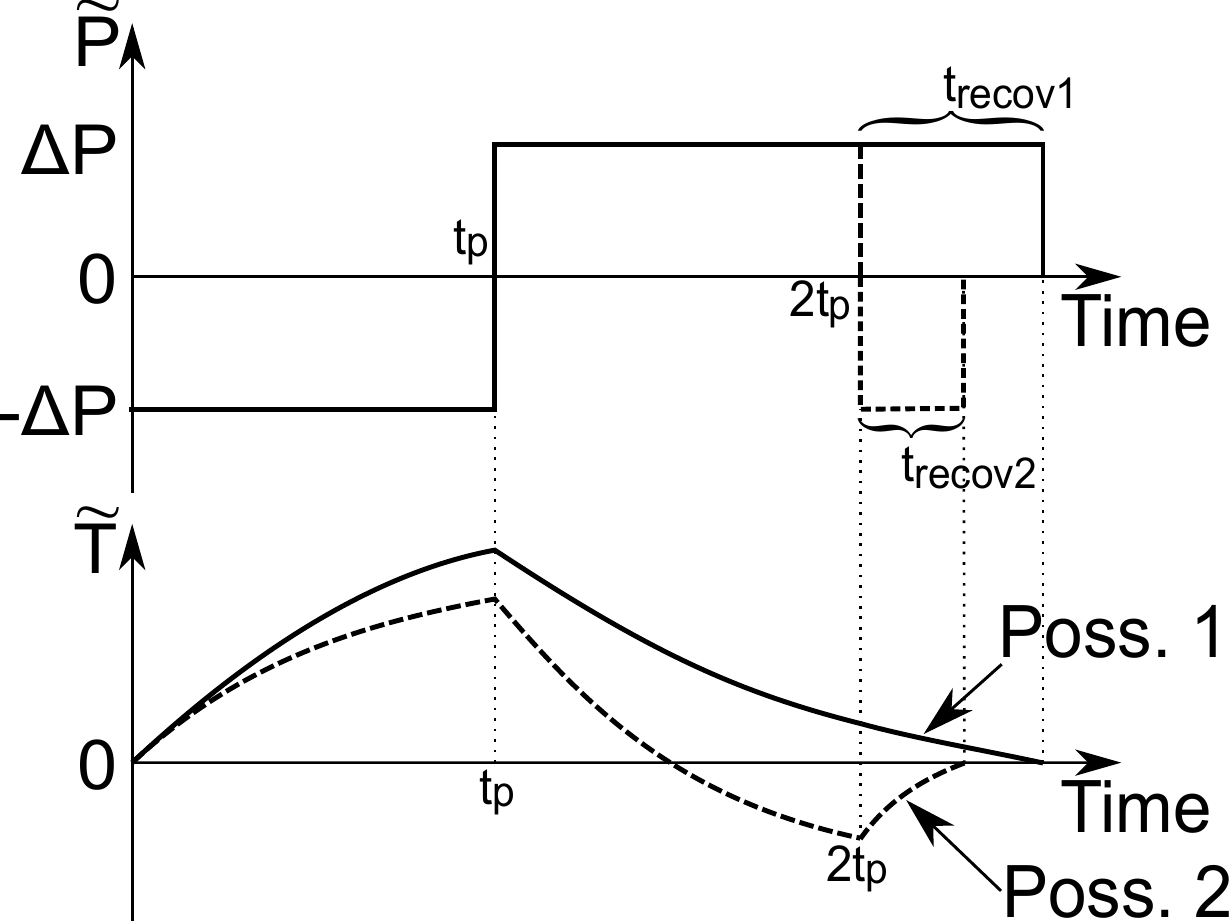}
	\caption{Down/up scenario, possibility 1: since $\tilde T(2t_p)>0$ additional charging is
          needed to bring back $\tilde T$ to its initial
          value (=0). Possibility 2: since $\tilde T(2t_p)<0$ additional discharging is
          needed to bring back $\tilde T$ to its initial
          value (=0).}\label{fig:discharge-then-charge-cases-combined}
\end{figure}
\fi

Lemma \ref{lem:rte-singlecycle} answers the question of which of the possibilities will occur in each
scenario. The proof of the lemma is included in the Appendix. We first state a technical result---Proposition \ref{prop:properties}---that is needed for both
stating and proving the lemma. 
\ifArxivVersion
The proof of the proposition is also included in the Appendix.
\else
The proof of the proposition involves simple algebra and can be found in \cite{raman:2018round}.
\fi

\begin{proposition}\label{prop:properties}
If $r_{oa}=1$ and $\Delta P<P_\hvac^\b$, the following statements hold.
\begin{sublist}
  \item \label{propitem:m-c-d} The airflow rate deviation during charging and discharging are
\begin{linenomath*}
	\begin{align}
	\tilde m_a = 
	\begin{cases}
	 \dfrac{-a + \sqrt{a^2+4d\Delta P}}{2d} \defeq \Delta m_c\; \text{(charging)}, \\ \\
	\dfrac{-a + \sqrt{a^2-4d\Delta P}}{2d} \defeq -\Delta m_d\; \text{(discharging)},
	\end{cases}
	\end{align}
\end{linenomath*}
which satisfy $\Delta m_d >  \Delta m_c > 0$. 
\item \label{propitem:alpha-inequality}  $\alpha >  \gamma \Delta m_d >\gamma
  \Delta m_{c}$.
\item \label{propitem:T-ss} Suppose charging or discharging occurs for infinite time, i.e.,
either $\tilde{P}(t) = \Delta P$ for all $t$ or  $\tilde{P}(t) =-\Delta P$ for all $t$, and let $\tilde T_c^{ss}$, $\tilde T_d^{ss}$
denote the corresponding steady-state values of the temperature deviation $\tilde{T}(t)$. Then $\tilde T_c^{ss}\eqdef\frac{-\beta \Delta m_{c}}{(\alpha+\gamma \Delta m_{c})}<0$ and $\tilde T_d^{ss}\eqdef\frac{\beta \Delta m_d}{(\alpha-\gamma \Delta m_d)}>0$ irrespective of the initial condition $\tilde{T}(0)$, and $\abs{\tilde T_c^{ss}} < \abs{\tilde T_d^{ss}}$.
  \end{sublist}
\end{proposition}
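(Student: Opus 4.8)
The plan is to exploit the simplification that the hypothesis $r_{oa}=1$ forces $b=0$ and $c=0$ in \eqref{eq:deltaP_nonlinear_bcd}, so that the algebraic relation \eqref{eq:deltaP_nonlinear} collapses to the pure quadratic $\tilde P = d\tilde m_a^2 + a\tilde m_a$ in the single variable $\tilde m_a$. With $r_{oa}=1$ one also has $P_\hvac(m_a)=\alpha_{1f}m_a^2+\tilde\alpha\, m_a$ with $\tilde\alpha\eqdef\alpha_{2f}+C_{pa}(T_{oa}-T_{sa})/COP$, and $a=2\alpha_{1f}m_a^\b+\tilde\alpha$. Every claim in the proposition then reduces to analyzing this quadratic together with the scalar linear ODE obtained by freezing $\tilde m_a$ in \eqref{eq:deltaT-dynamics}.

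For part \ref{propitem:m-c-d} I would set $\tilde P=\Delta P$ (charging) and $\tilde P=-\Delta P$ (discharging) and read off the roots of $d\tilde m_a^2+a\tilde m_a\mp\Delta P=0$. Two preliminary facts drive the root selection. First, $a>0$: it is the marginal HVAC power $\partial P_\hvac/\partial m_a$ at baseline, equal to $(2\alpha_{1f}m_a^\b+\alpha_{2f})+C_{pa}(T_{oa}-T_{sa})/COP$, and both grouped terms are positive in the operating range (fan power increasing at baseline, and $T_{oa}>T_{sa}$). Second, the discharge discriminant $a^2-4d\Delta P$ is positive; the clean route is the identity $a^2-4d\Delta P=\tilde\alpha^2+4\alpha_{1f}(P_\hvac^\b-\Delta P)$ (equivalently, shift-invariance of the discriminant under $\tilde m_a=m_a-m_a^\b$), which is manifestly positive since $\Delta P<P_\hvac^\b$ and $\alpha_{1f}>0$. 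With the discriminant positive, the charging equation has product of roots $-\Delta P/d<0$, giving one positive root selected as $\Delta m_c>0$; the discharging equation has product $+\Delta P/d>0$ and sum $-a/d<0$, so both roots are negative and the physical (baseline-adjacent) one is $-\Delta m_d$ with $\Delta m_d>0$ because $a>\sqrt{a^2-4d\Delta P}$. Finally $\Delta m_d>\Delta m_c$ reduces to $2a>\sqrt{a^2-4d\Delta P}+\sqrt{a^2+4d\Delta P}$, which after squaring (using $a>0$) becomes $a^2>\sqrt{a^4-16d^2\Delta P^2}$, true because $d,\Delta P>0$; intuitively this is just convexity of $P_\hvac$, namely that the same power swing costs more airflow on the decreasing side.

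For part \ref{propitem:alpha-inequality} the inequality $\gamma\Delta m_d>\gamma\Delta m_c$ is immediate from part \ref{propitem:m-c-d} and $\gamma>0$, so the only content is $\alpha>\gamma\Delta m_d$. Using the definitions in \eqref{eq:def-alpha-beta-gamma}, I would write $\alpha-\gamma\Delta m_d=\frac{RC_{pa}(m_a^\b-\Delta m_d)+1}{RC}$, so it suffices that the discharge airflow $m_a^\b-\Delta m_d$ be nonnegative. But $m_a^\b-\Delta m_d$ is exactly the airflow realizing power $P_\hvac^\b-\Delta P>0$, i.e. the larger root of $\alpha_{1f}m_a^2+\tilde\alpha m_a=P_\hvac^\b-\Delta P$, whose product of roots $-(P_\hvac^\b-\Delta P)/\alpha_{1f}$ is negative; hence this root is positive and $\alpha-\gamma\Delta m_d>\frac{1}{RC}>0$. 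For part \ref{propitem:T-ss}, freezing $\tilde m_a$ at a constant in \eqref{eq:deltaT-dynamics} gives $\dot{\tilde T}=-(\alpha+\gamma\tilde m_a)\tilde T-\beta\tilde m_a$, a linear ODE with equilibrium $\tilde T^{ss}=-\beta\tilde m_a/(\alpha+\gamma\tilde m_a)$; substituting $\tilde m_a=\Delta m_c$ and $\tilde m_a=-\Delta m_d$ yields the stated $\tilde T_c^{ss}$ and $\tilde T_d^{ss}$. Their signs follow from $\beta>0$ (since $T^\b>T_{sa}$), from $\alpha+\gamma\Delta m_c>0$, and from $\alpha-\gamma\Delta m_d>0$ just proved; that last positivity is also exactly the rate-coefficient positivity that makes each frozen-input ODE asymptotically stable, so convergence to $\tilde T^{ss}$ holds for every initial condition $\tilde T(0)$. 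The magnitude comparison $\abs{\tilde T_c^{ss}}<\abs{\tilde T_d^{ss}}$ reduces, after clearing the positive denominators, to $\alpha(\Delta m_c-\Delta m_d)<2\gamma\Delta m_c\Delta m_d$, whose left side is negative by part \ref{propitem:m-c-d} and right side positive.

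The main obstacle is the bookkeeping in part \ref{propitem:m-c-d}: selecting the physically correct root of each quadratic and, above all, converting the abstract hypothesis $\Delta P<P_\hvac^\b$ into positivity of the discharge discriminant. The nonobvious ingredient is the identity $a^2-4d\Delta P=\tilde\alpha^2+4\alpha_{1f}(P_\hvac^\b-\Delta P)$ (equivalently, shift-invariance of the discriminant), which is what lets the hypothesis do its work, and which is reused in part \ref{propitem:alpha-inequality} to certify positivity of the discharge airflow. Once that identity and $a>0$ are in hand, parts \ref{propitem:alpha-inequality} and \ref{propitem:T-ss} are short sign computations.
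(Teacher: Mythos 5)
Your proof is correct, and its overall skeleton necessarily coincides with the paper's: with $r_{oa}=1$ you get $b=c=0$ so that \eqref{eq:deltaP_nonlinear} collapses to $\tilde P = a\tilde m_a + d\tilde m_a^2$, you select roots by feasibility of the total airflow, and you analyze the frozen-input scalar LTI systems for part \ref{propitem:T-ss}. Where you genuinely diverge is in the technical underpinnings, and mostly to your advantage. The paper routes these through a separate technical proposition proved in the Appendix: $a>0$ via the identity $a\,m_a^\b=\alpha_{1f}(m_a^\b)^2+P_\hvac^\b$; positivity of the discharge discriminant by evaluating at the extreme $\Delta P=P_\hvac^\b$ and expanding; $a/d>m_a^\b$ and a fourth item bounding $(a+\sqrt{a^2-4d\Delta P})/(2d)$ below by $m_a^\b$, used to exclude the infeasible roots. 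Your single identity $a^2-4d\Delta P=\tilde\alpha^2+4\alpha_{1f}(P_\hvac^\b-\Delta P)$, with $\tilde\alpha\eqdef\alpha_{2f}+C_{pa}(T_{oa}-T_{sa})/COP$, consolidates all of this: it gives the discriminant bound with an explicit margin, it forces the root selection (the shifted quadratic in the total airflow $m_a$ has exactly one positive root), and in part \ref{propitem:alpha-inequality} it upgrades the paper's informal assertion that ``the maximum value that $\Delta m_d$ can take is $m_a^\b$'' into the proved strict inequality $m_a^\b-\Delta m_d>0$, with the quantitative bound $\alpha-\gamma\Delta m_d\geq 1/(RC)$ that the paper does not state. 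Your single-squaring reduction of $\Delta m_d>\Delta m_c$ to $a^2>\sqrt{a^4-16d^2\Delta P^2}$ is shorter than the paper's substitution $a^2=\nu+\epsilon$ followed by squaring twice down to $\nu^4>0$, and your direct cross-multiplication giving $\alpha(\Delta m_c-\Delta m_d)<2\gamma\Delta m_c\Delta m_d$ replaces the paper's two-step chain through $\beta\Delta m_c/(\alpha-\gamma\Delta m_c)$. Two small touch-ups are warranted, neither affecting correctness: for $a>0$ your appeal to ``fan power increasing at baseline'' is a physical gloss not among the stated hypotheses, but it follows from nonnegativity of fan power at the baseline airflow, since $P_f(m_a^\b)\geq 0$ gives $2\alpha_{1f}m_a^\b+\alpha_{2f}\geq\alpha_{1f}m_a^\b>0$ (the paper's identity $a\,m_a^\b=\alpha_{1f}(m_a^\b)^2+P_\hvac^\b$ is the assumption-free route); and you should state explicitly that the negative charging root is excluded by the same shift argument you use for discharge, i.e., the shifted charging quadratic has root product $-(P_\hvac^\b+\Delta P)/\alpha_{1f}<0$ and hence exactly one positive root, so the feasible deviation is $\Delta m_c$.
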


Now we are ready to state the lemma.

\begin{lemma}\label{lem:rte-singlecycle}
Suppose $r_{oa}=1$ (i.e., 100\% outside air) and the time period
$2t_p$ is small enough so that $(\alpha +\gamma \Delta m_c)t_p \ll 1$ ($\alpha$, $\gamma$ are
defined in \eqref{eq:def-alpha-beta-gamma} and $\Delta m_c$ is defined in Proposition~\ref{prop:properties}\ref{propitem:m-c-d}), which implies that the approximation $e^x \approx 1+x$
  is accurate with $x$ replaced by $(\alpha +\gamma \Delta m_c)t_p$. Then, in the up/down
  scenario, the RTE for one cycle is $\eta_\rte<1$ (possibility 1 shown in Figure~\ref{fig:charge-then-discharge-cases-combined}). In down/up scenario, there is a
  critical value $t_p^*$:
    \begin{align}
  \label{eq:def-tp*}
t_p^* \eqdef \frac{-1}{\alpha+\gamma \Delta m_c}\log \frac{\Delta m_c}{\Delta m_d}, 
\end{align} 
 such that if $t_p < t_p^*$, then
 \ifArxivVersion
  $\eta_\rte<1$ for one cycle (possibility 1 shown in Figure \ref{fig:discharge-then-charge-cases-combined}); otherwise $\eta_\rte>1$ (possibility 2 shown in Figure~\ref{fig:discharge-then-charge-cases-combined}).
  \else
  $\eta_\rte<1$ for one cycle; otherwise $\eta_\rte>1$.
  \fi
\end{lemma}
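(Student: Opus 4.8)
The plan is to exploit the simplification that $r_{oa}=1$ produces: then $b=c=0$ in \eqref{eq:deltaP_nonlinear}, so the algebraic output equation collapses to $\tilde P=a\tilde m_a+d\tilde m_a^2$, which no longer involves $\tilde T$. Hence requiring $\tilde P=\Delta P$ during charging (resp.\ $\tilde P=-\Delta P$ during discharging) pins $\tilde m_a$ at the constant value $\Delta m_c$ (resp.\ $-\Delta m_d$) of Proposition~\ref{prop:properties}\ref{propitem:m-c-d} throughout each half-period. Substituting these constants into \eqref{eq:deltaT-dynamics} turns it into two linear, constant-coefficient ODEs, $\dot{\tilde T}=-\lambda_c\tilde T-\beta\Delta m_c$ (charging) and $\dot{\tilde T}=-\lambda_d\tilde T+\beta\Delta m_d$ (discharging), where $\lambda_c\eqdef\alpha+\gamma\Delta m_c$ and $\lambda_d\eqdef\alpha-\gamma\Delta m_d$; both are positive with $\lambda_d<\alpha<\lambda_c$ by Proposition~\ref{prop:properties}\ref{propitem:alpha-inequality}. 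Each phase I would solve in closed form about its equilibrium $\tilde T_c^{ss}<0$ or $\tilde T_d^{ss}>0$ from Proposition~\ref{prop:properties}\ref{propitem:T-ss}. By the discussion preceding the lemma, both scenarios then reduce to signing $\tilde T(2t_p)$: if $\tilde T(2t_p)>0$ the system recovers by further charging, giving $t_c=t_p+\tstop$, $t_d=t_p$ and hence $\eta_\rte<1$ via \eqref{eq:rte-simple}; if $\tilde T(2t_p)<0$ recovery is by discharging and $\eta_\rte>1$.

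For the up/down scenario I would propagate $\tilde T(0)=0$ through a charging half-period, obtaining $\tilde T(t_p)=\tilde T_c^{ss}\bigl(1-e^{-\lambda_c t_p}\bigr)$. Linearizing the small factor via $1-e^{-\lambda_c t_p}\approx\lambda_c t_p$ gives the clean leading-order value $\tilde T(t_p)\approx-\beta\Delta m_c t_p$. Carrying this through the discharging half-period and again linearizing the within-phase factor yields $\tilde T(2t_p)\approx\beta(\Delta m_d-\Delta m_c)t_p$. Since $\beta>0$ (Assumption~\ref{as:basic}(iv) gives $T^\b>T_{sa}$) and $\Delta m_d>\Delta m_c$, this is strictly positive, so possibility~1 holds and $\eta_\rte<1$.

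The down/up scenario is where the critical period emerges and is the delicate part. Propagating $\tilde T(0)=0$ through discharging then charging gives
\begin{linenomath*}
\begin{align*}
\tilde T(2t_p)=-\tfrac{\beta\Delta m_c}{\lambda_c}\bigl(1-e^{-\lambda_c t_p}\bigr)+\tfrac{\beta\Delta m_d}{\lambda_d}\bigl(1-e^{-\lambda_d t_p}\bigr)e^{-\lambda_c t_p}.
\end{align*}
\end{linenomath*}
The key step is to linearize only the two small \emph{within-phase} accumulation factors, $1-e^{-\lambda_c t_p}\approx\lambda_c t_p$ and $1-e^{-\lambda_d t_p}\approx\lambda_d t_p$, while keeping the \emph{carryover} factor $e^{-\lambda_c t_p}$ exact, since it is precisely its deviation from $1$ that fixes the sign. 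This collapses the expression to $\tilde T(2t_p)\approx\beta t_p\,\Delta m_d\bigl(e^{-\lambda_c t_p}-\Delta m_c/\Delta m_d\bigr)$. Since \eqref{eq:def-tp*} gives $e^{-\lambda_c t_p^*}=\Delta m_c/\Delta m_d$ and $t\mapsto e^{-\lambda_c t}$ is strictly decreasing, $t_p<t_p^*$ makes the bracket positive ($\tilde T(2t_p)>0$, so $\eta_\rte<1$) and $t_p>t_p^*$ makes it negative ($\eta_\rte>1$), exactly as claimed.

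I expect the main obstacle to be this asymmetric bookkeeping of the exponentials: to reproduce the logarithmic formula \eqref{eq:def-tp*} rather than its first-order surrogate, one must linearize the $O(t_p)$ accumulation terms yet retain $e^{-\lambda_c t_p}$. Alongside, I would check the consistency of the approximation, noting $\lambda_d t_p<\lambda_c t_p\ll1$ since $\lambda_d<\lambda_c$, and verify that the recovery crossings are well defined: during recovery charging $\tilde T$ falls monotonically toward $\tilde T_c^{ss}<0$ and during recovery discharging it rises toward $\tilde T_d^{ss}>0$, so each trajectory meets $\tilde T=0$ in finite time, making $\tstop$ finite and positive.
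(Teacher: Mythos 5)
Your proposal is correct and follows essentially the same route as the paper's proof: with $r_{oa}=1$ you reduce to the two stable first-order LTI systems from Proposition~\ref{prop:properties}, solve for $\tilde T(2t_p)$ in each scenario, and sign it via the same asymmetric linearization (expanding the within-phase factors $1-e^{-\lambda t_p}$ to first order while retaining the carryover exponential), which reproduces the paper's approximation \eqref{eq:T2tp_downup_approx} and the critical period \eqref{eq:def-tp*} exactly. The only cosmetic deviation is that in the up/down case you also replace the carryover factor $e^{-(\alpha-\gamma\Delta m_d)t_p}$ by $1$, which is harmless since the paper's retained bracket $\Delta m_d-\Delta m_c e^{-(\alpha-\gamma\Delta m_d)t_p}$ is positive for the same reason, namely $\Delta m_d>\Delta m_c$.
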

\begin{comment}\label{com:single-cycle}
  The RTE values obtained in the LANL experiments are almost always less than 1, in both up/down
  and down/up scenarios, but in a small fraction of up/down and
  down/up experiments the RTE was observed to be larger than one; see
  Figure 5 of \cite{beihisbac:2015}. While Lemma \ref{lem:rte-singlecycle} shows that it is possible for the RTE
  to be either larger or smaller than 1 as observed in the experiments, its prediction that $\eta_\rte$ cannot be greater than 1 for the up/down scenario is inconsistent with the
  observation in~\cite{beihisbac:2015}. Interestingly, the simulation
  study~\cite{linmatjohhisbac:2017} also observed that the RTE is
  smaller than 1 for the up/down scenario and greater than 1 for
  down/up scenario. This is consistent with our results but inconsistent with LANL experiments. In~\cite{linmatjohhisbac:2017}, they did not test for small enough values for the time period to notice the existence of a critical time period in the down/up scenario. 
\end{comment}

The assumptions made in the lemma are for ease of analysis; its predictions still hold when they are violated. Figure~\ref{fig:RTE-robustness-tp} shows the numerically computed $\eta_\rte$ for various values of $t_p$ using the parameter values listed in the next paragraph. We see from the Figure~\ref{fig:RTE-robustness-tp} that the predictions regarding $\eta_\rte$ from Lemma~\ref{lem:rte-singlecycle} hold even when $(\alpha+\gamma \Delta m_c)t_p$ is not small and $r_{oa}$ is not 1. For instance, when $t_p=300$ minutes, $(\alpha+\gamma \Delta m_c)t_p = 1.8$, which is not tiny; yet numerically computed values are consistent with the lemma's prediction.

%
The following parameters were chosen for the numerical computations:
$T_{sa}=55$\degree F, $T^\b=72$\degree F, $T_L=70$\degree F,
and $T_H=74$\degree F.
\ifArxivVersion
 \begin{figure}[htpb]
 	\centering
 	\includegraphics[width=0.98\linewidth]{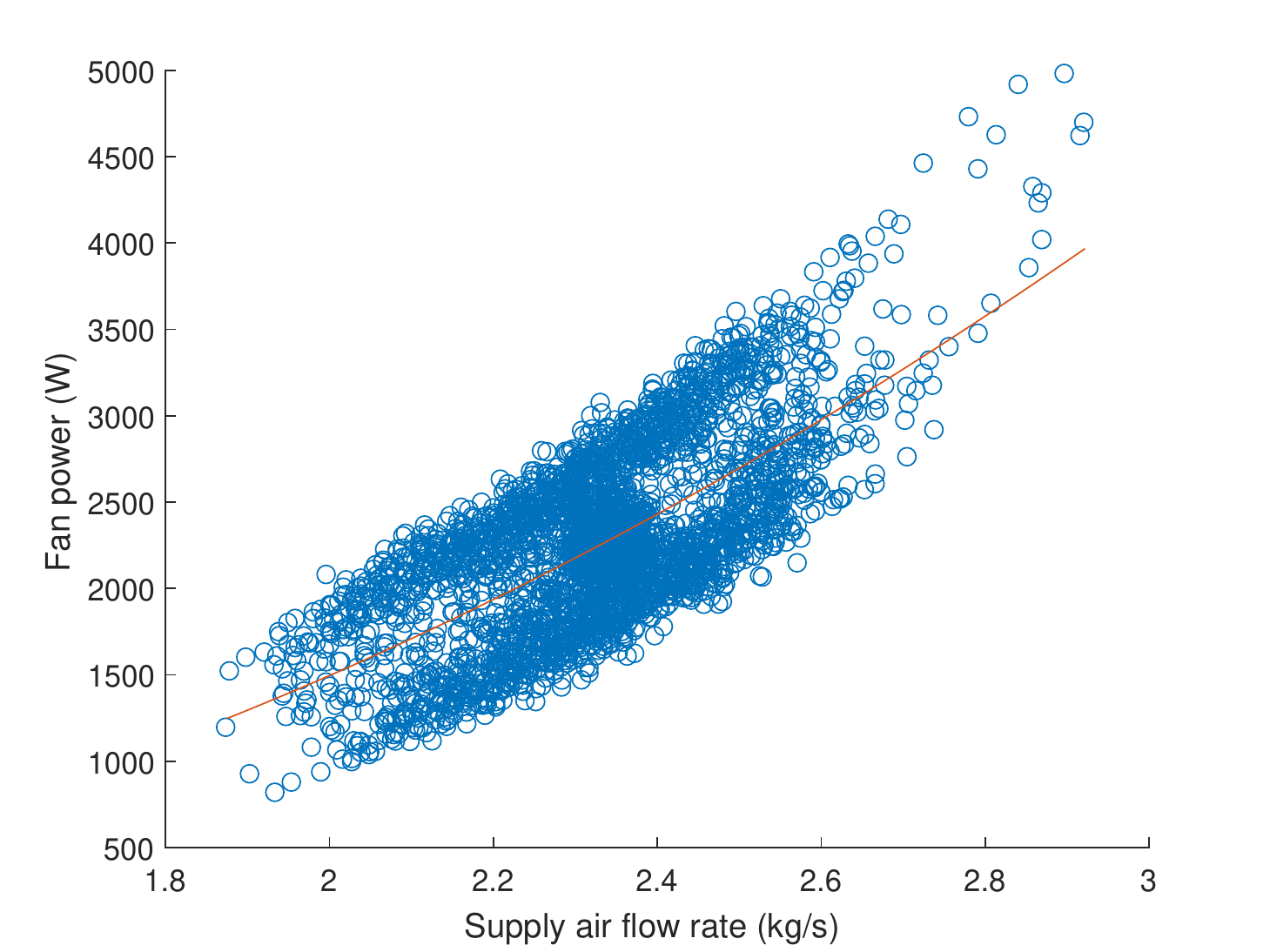}
 	\caption{Fan power vs. airflow rate; measurements from AHU-2
          of Pugh Hall at UF (circles), and predictions from the
          best fit model \eqref{eq:fanpower} to the measurements (curve).}
 	\label{fig:fanpower-ahu2-quadratic}
 \end{figure}
\fi
The building in this paper is based on a large auditorium ($\sim 6$ m high, floor area of $\sim 465 $ m$^2$) in Pugh Hall located in the University of Florida campus, which is served by a dedicated air handling unit. We choose $m_a^\b=2.27$ kg/s since that is representative of the airflow rate to this zone. We choose the following parameters, guided by~\cite{LinThesis:2014}: $C=3.4\times10^7$ J/K and $R=1.3\times10^{-3}$ K/W. We also choose $T_{oa}=80^\circ$F and $COP=3.5$, somewhat arbitrarily. 
The fan power coefficients were chosen to be $\alpha_{1f}=662$ W/(kg/s)$^2$ and $\alpha_{2f}=-576$ W/(kg/s), based on fitting a quadratic model to measured fan power from the zone in  
\ifArxivVersion
question; see Figure~\ref{fig:fanpower-ahu2-quadratic}.
\else
question. The data and the fit can be found in Figure~5 of \cite{raman:2018round}.
\fi

\begin{figure}[h]
	\centering
	\includegraphics[width=0.98\linewidth]{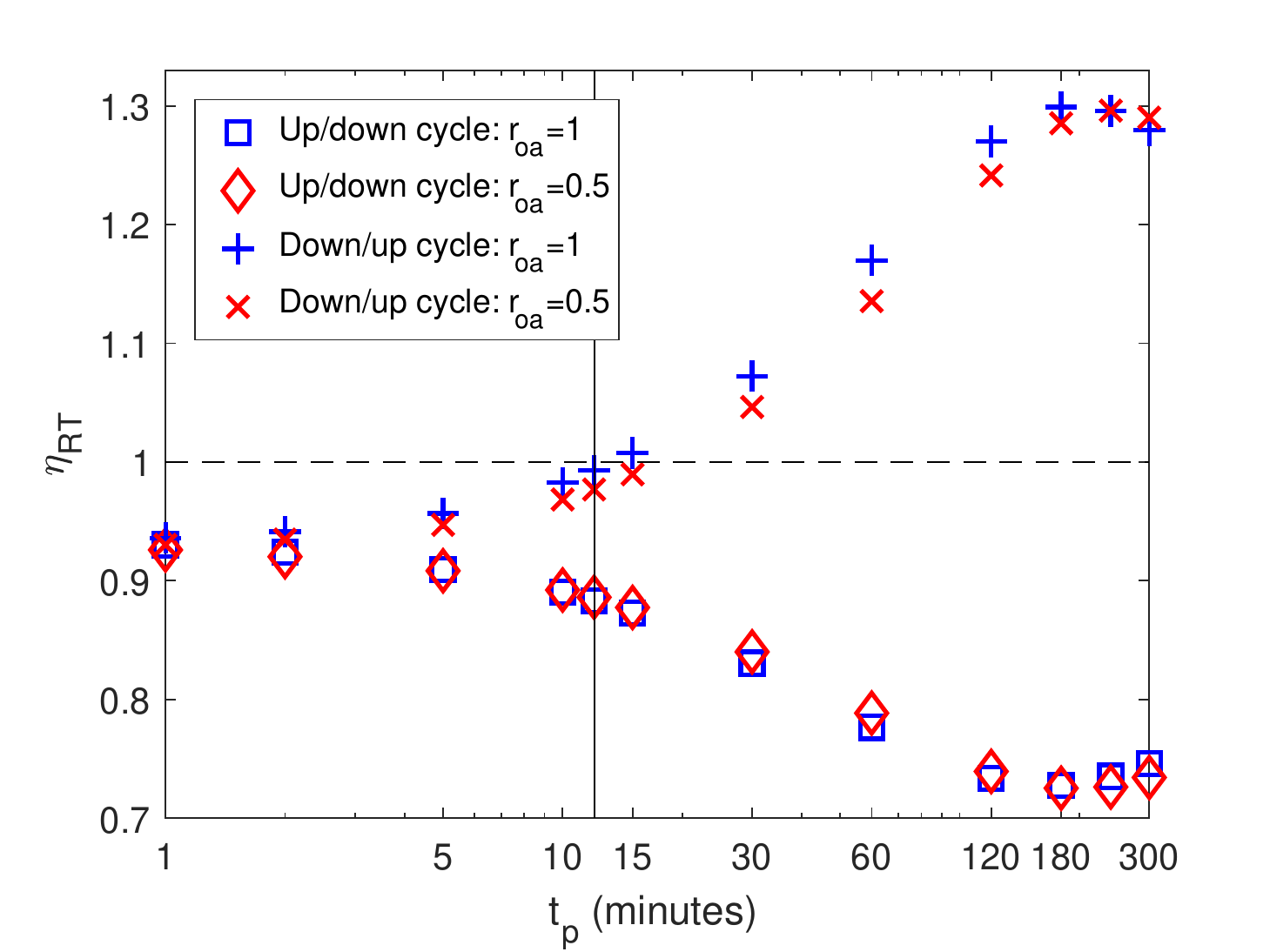}
	\caption{$\eta_\rte$ vs. $t_p$, for $r_{oa}=1$ and $r_{oa}=0.5$; $\Delta P = 0.2P^\b_\hvac$. The vertical line shown is $t_p^*$ ($\approx$12 minutes) computed from \eqref{eq:def-tp*} for $r_{oa}=1$.}
	\label{fig:RTE-robustness-tp}
\end{figure}

\subsection{Multiple periods of square-wave power consumption}
We now consider $n$ periods of the square-wave, $n>1$. At the end of
$n$ periods, the temperature deviation may not be exactly 0 (i.e.,
$\tilde T(n2t_p) \neq 0$), even though its initial value was $0$
(i.e., $\tilde T(0)=0$). Charging or discharging might be needed for
an additional amount of time $t_{recov}$ to bring the temperature deviation back
to $0$. Whether recovery to the initial SoC requires additional
charging or additional discharging depends on whether $\tilde T(n2t_p)$ is positive or
negative. In either case, since $\tilde{T}(0)= \tilde{T}(n2t_p+t_{recov})=0$, according to
Definition~\ref{def:cd}, the time interval $[0, \; n2t_p+t_{recov}]$
constitutes a complete charge-discharge interval of the virtual battery. The RTE computed over this interval using Definition \ref{def:rte} or equivalently \eqref{eq:rte-simple} is called \emph{the RTE for n cycles} or $\eta_\rte(n)$.

Figure~\ref{fig:bringSOCto0_mult} shows an illustration of the two
possible scenarios for the possible values of $\tilde{T}(n2t_p)$. For
the sake of concreteness, we have assumed the VES service starts with
a down/up cycle in the figure. 
\begin{figure}[h]
		\centering
		\includegraphics[width=0.75\linewidth]{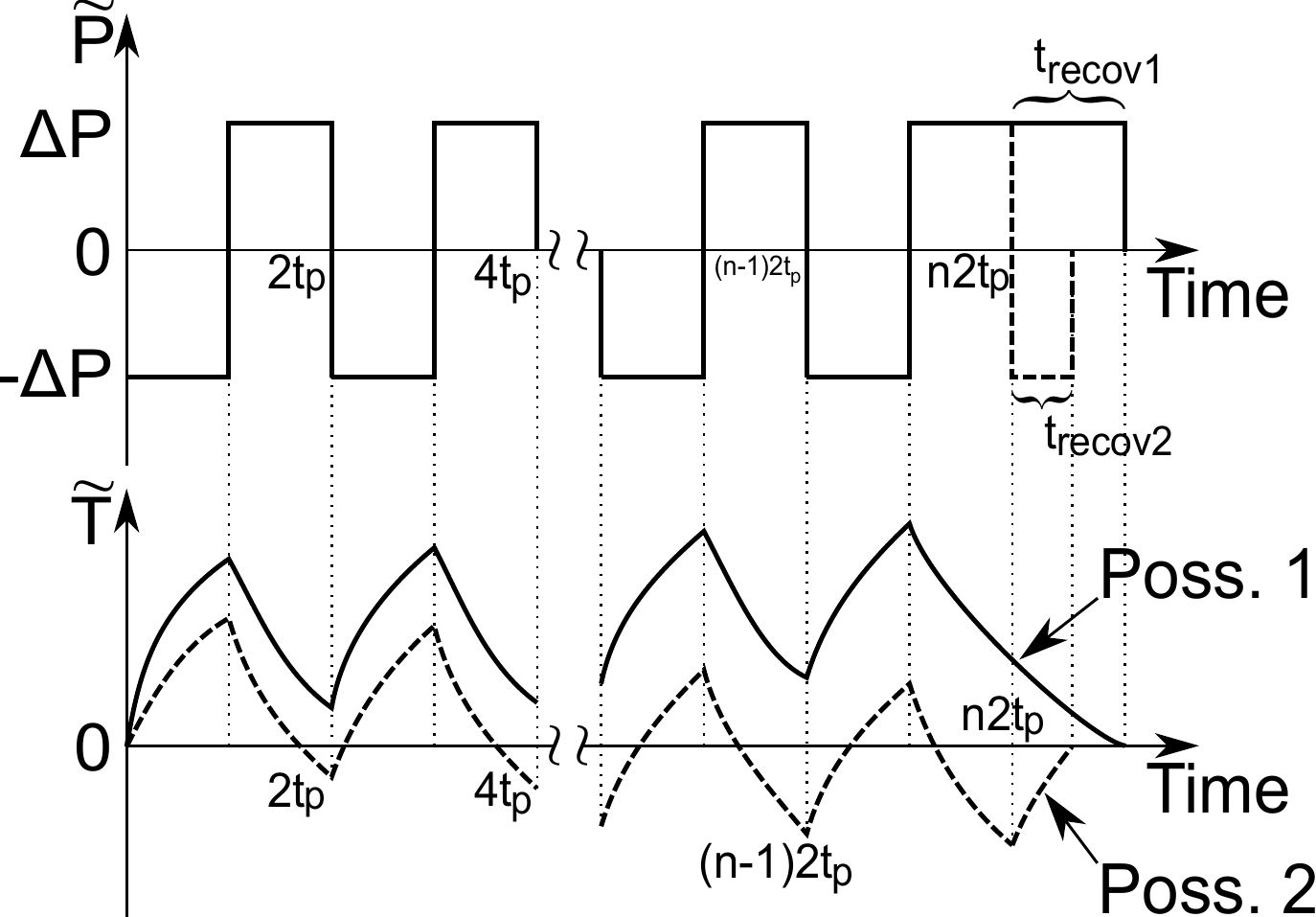}
		\caption{Additional charging or discharging needed to
                  bring $\tilde T$ to its initial value (=0)
                  after $n$ periods of down/up cycle.}\label{fig:bringSOCto0_mult}
\end{figure}
In the first possibility, denoted by the solid lines,
$\tilde{T}(n2t_p)\geq 0$, and therefore
additional charging is performed for $t_{recov1}\geq 0$ amount of time in
order to bring the temperature deviation to 0. If this possibility
were to occur, within
the complete charge-discharge interval of $[0,n2t_p+t_{recov1}]$, the
charging time is $nt_p+t_{recov1}$, while the discharging time is
$nt_p$. It now follows from \eqref{eq:rte-simple} that for possibility 1
\begin{align}\label{eq:eta-n-cd}
  \eta_\rte(n) = \frac{t_d}{t_c} = \frac{n t_p}{n t_p+t_{recov1}(n)} \leq 1.
\end{align}
In the second possibility, denoted by the dashed lines, $\tilde{T}(n2t_p)
\leq  0$ and therefore
additional discharging is needed for $t_{recov2}\geq 0$ amount of
time. For this possibility, 
\begin{align}\label{eq:eta-n-dc}
  \eta_\rte(n) = \frac{t_d}{t_c} = \frac{n t_p+t_{recov2}(n)}{n t_p} \geq 1.
\end{align}
If the VES service were to start with an up/down cycle, the same two
possibilities exist in principle, so again the RTE can be smaller or larger than one
depending on whether the temperature deviation at the end of the $n$
periods is positive or negative. 

The proof of the main result of the paper, Theorem \ref{lem:rte-asymptotic}, needs a key intermediate result which is presented in the next lemma.

\begin{lemma}\label{prop:T-bounded}
	For $r_{oa}=1$ and $\tilde T(0)=0$, the magnitude of the temperature deviation $|\tilde{T}(t)|$ is bounded by $max\Big\{ \lvert\tilde T_c^{ss}\rvert, |\tilde T_d^{ss}|\Big\}, \, \forall t$ ($\tilde T_c^{ss}$ and $\tilde T_d^{ss}$ are defined in Proposition~\ref{prop:properties}\ref{propitem:T-ss}).
\end{lemma}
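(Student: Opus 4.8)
The plan is to exploit the fact that, when $r_{oa}=1$, the commanded airflow deviation is \emph{constant} within each charging or discharging phase, so that the otherwise-bilinear dynamics \eqref{eq:deltaT-dynamics} collapse to a scalar \emph{linear} ODE on each phase. By Proposition~\ref{prop:properties}\ref{propitem:m-c-d} we have $\tilde m_a = \Delta m_c$ during charging and $\tilde m_a = -\Delta m_d$ during discharging. Substituting into \eqref{eq:deltaT-dynamics}, a charging phase obeys $\dot{\tilde T} = -(\alpha+\gamma\Delta m_c)\tilde T - \beta\Delta m_c$ and a discharging phase obeys $\dot{\tilde T} = -(\alpha-\gamma\Delta m_d)\tilde T + \beta\Delta m_d$. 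The coefficient of $\tilde T$ is strictly negative in both cases: for charging this is immediate, and for discharging it is exactly the inequality $\alpha > \gamma\Delta m_d$ supplied by Proposition~\ref{prop:properties}\ref{propitem:alpha-inequality}. Hence each phase is a stable scalar linear system whose unique equilibrium is $\tilde T_c^{ss}$ (charging) or $\tilde T_d^{ss}$ (discharging), as in Proposition~\ref{prop:properties}\ref{propitem:T-ss}.

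First I would record the elementary monotonicity property of a stable scalar linear ODE $\dot x = -\lambda(x-x^{ss})$ with $\lambda>0$: its solution $x(t) = x^{ss} + (x(t_0)-x^{ss})e^{-\lambda(t-t_0)}$ approaches $x^{ss}$ monotonically and therefore remains, for all $t\ge t_0$, inside the closed interval with endpoints $x(t_0)$ and $x^{ss}$. In particular the trajectory never overshoots its steady state.

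The core step is then to show that the interval $I \eqdef [\tilde T_c^{ss},\,\tilde T_d^{ss}]$ is forward invariant under the switched dynamics. Suppose $\tilde T(t_0)\in I$ at the start of a phase. If the phase is charging, the trajectory moves monotonically toward $\tilde T_c^{ss}$, which is the \emph{left} endpoint of $I$; by the monotonicity property it stays between $\tilde T(t_0)$ and $\tilde T_c^{ss}$, hence inside $I$. If the phase is discharging, it moves monotonically toward $\tilde T_d^{ss}$, the \emph{right} endpoint, and again stays inside $I$. Since $\tilde T$ is continuous across switching instants, applying this phase by phase shows $\tilde T(t)\in I$ for every $t\ge 0$ whenever $\tilde T(0)\in I$. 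Because $\tilde T_c^{ss}<0<\tilde T_d^{ss}$ (Proposition~\ref{prop:properties}\ref{propitem:T-ss}), the hypothesis $\tilde T(0)=0$ indeed places the initial condition in $I$, so $\tilde T_c^{ss}\le\tilde T(t)\le\tilde T_d^{ss}$ for all $t$. This yields $|\tilde T(t)| \le \max\{|\tilde T_c^{ss}|,\,|\tilde T_d^{ss}|\}$, which is the claim.

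I expect the only genuine subtlety to be the verification that the discharging phase is stable, i.e.\ $\alpha-\gamma\Delta m_d>0$; this is not visible from the definitions and is precisely why Proposition~\ref{prop:properties}\ref{propitem:alpha-inequality} must be invoked. Everything else is bookkeeping: confirming the signs of the two steady states so that $0\in I$, and noting that the monotone, non-overshooting behaviour of each linear phase makes the invariance argument immune to how the charging and discharging durations are interleaved or to how many cycles have elapsed.
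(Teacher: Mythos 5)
Your proof is correct and takes essentially the same route as the paper's: both reduce, for $r_{oa}=1$, to the two asymptotically stable scalar LTI phases \eqref{eq:deltaT_c-dynamics}--\eqref{eq:deltaT_d-dynamics} (using Proposition~\ref{prop:properties}\ref{propitem:m-c-d} for the constant airflow deviations and \ref{propitem:alpha-inequality} for stability of the discharging phase), invoke the monotone non-overshooting step response of a stable first-order system, and induct across the switching instants. Your packaging as forward invariance of the interval $\left[\tilde T_c^{ss},\,\tilde T_d^{ss}\right]$, which contains $\tilde T(0)=0$, is a marginally tidier (and slightly sharper, since signed) statement of the same argument; the paper instead carries $\lvert\tilde T(0)\rvert$ through a running maximum and discards it at the end using $\tilde T(0)=0$.
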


The proof of Lemma \ref{prop:T-bounded} is presented in the Appendix.

\begin{theorem}\label{lem:rte-asymptotic}
If $r_{oa}=1$, $\lim_{n \to \infty} \eta_\rte(n) = 1$.  
\end{theorem}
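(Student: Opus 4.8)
The plan is to leverage the uniform boundedness of the temperature deviation established in Lemma~\ref{prop:T-bounded}: if the deviation $\tilde T(n2t_p)$ at the end of the $n$ periods is bounded uniformly in $n$, then the recovery time $t_{recov}(n)$ required to return $\tilde T$ to zero is also bounded uniformly in $n$, whereas the accumulated charging/discharging time $nt_p$ grows without bound. Feeding these facts into the two RTE expressions~\eqref{eq:eta-n-cd} and~\eqref{eq:eta-n-dc} then squeezes $\eta_\rte(n)$ to $1$.

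First I would make the recovery dynamics explicit. With $r_{oa}=1$ held fixed, the airflow deviation during charging (resp.\ discharging) is the \emph{constant} $\Delta m_c$ (resp.\ $-\Delta m_d$) of Proposition~\ref{prop:properties}\ref{propitem:m-c-d}, so the bilinear equation~\eqref{eq:deltaT-dynamics} collapses to a scalar linear ODE: $\dot{\tilde T}=-(\alpha+\gamma\Delta m_c)\tilde T-\beta\Delta m_c$ while charging, and $\dot{\tilde T}=-(\alpha-\gamma\Delta m_d)\tilde T+\beta\Delta m_d$ while discharging. By Proposition~\ref{prop:properties}\ref{propitem:alpha-inequality} both rates $\alpha+\gamma\Delta m_c$ and $\alpha-\gamma\Delta m_d$ are strictly positive, so each ODE is exponentially stable and relaxes monotonically toward its steady state, $\tilde T_c^{ss}<0$ in the charging case and $\tilde T_d^{ss}>0$ in the discharging case.

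Next I would bound the recovery time. Set $M\eqdef\max\{\lvert\tilde T_c^{ss}\rvert,\lvert\tilde T_d^{ss}\rvert\}$; Lemma~\ref{prop:T-bounded} gives $\lvert\tilde T(n2t_p)\rvert\le M$ for every $n$. In possibility~1 we have $0\le\tilde T(n2t_p)\le M$, and the extra charging pushes $\tilde T$ from this nonnegative value toward the negative target $\tilde T_c^{ss}$, so $\tilde T$ crosses zero; solving the linear ODE gives
\[
t_{recov1}(n)=\frac{1}{\alpha+\gamma\Delta m_c}\log\!\left(1+\frac{\tilde T(n2t_p)}{\lvert\tilde T_c^{ss}\rvert}\right)\le\frac{1}{\alpha+\gamma\Delta m_c}\log\!\left(1+\frac{M}{\lvert\tilde T_c^{ss}\rvert}\right)\defeq\bar t_c,
\]
a constant independent of $n$. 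The symmetric computation in possibility~2 (extra discharging drives the nonpositive $\tilde T(n2t_p)$ up toward $\tilde T_d^{ss}>0$) bounds $t_{recov2}(n)$ by a constant $\bar t_d$. The essential mechanism is that the initial deviation fed into the recovery phase is uniformly bounded while the recovery rate is fixed, so the zero-crossing time depends monotonically and continuously on that initial deviation and is therefore capped by its value at $\pm M$.

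Finally I would pass to the limit. In possibility~1, \eqref{eq:eta-n-cd} yields $nt_p/(nt_p+\bar t_c)\le\eta_\rte(n)\le1$, and in possibility~2, \eqref{eq:eta-n-dc} yields $1\le\eta_\rte(n)\le(nt_p+\bar t_d)/(nt_p)$; letting $n\to\infty$ forces $\eta_\rte(n)\to1$ in both cases. I expect the main obstacle to be the bookkeeping that justifies applying~\eqref{eq:eta-n-cd} and~\eqref{eq:eta-n-dc} verbatim, namely verifying that the recovery really is a single, uninterrupted charging (or discharging) interval that actually reaches zero, and that the whole-trajectory bound of Lemma~\ref{prop:T-bounded} legitimately governs the specific value $\tilde T(n2t_p)$ entering recovery. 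Once the sign of $\tilde T(n2t_p)$ is shown to select the correct recovery mode and the crossing is confirmed, the limit argument is routine.
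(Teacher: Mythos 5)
Your proof is correct and takes essentially the same route as the paper's: both use Lemma~\ref{prop:T-bounded} to bound $\tilde T(n2t_p)$ uniformly in $n$, invoke the asymptotically stable linear recovery dynamics with steady states $\tilde T_c^{ss}<0$, $\tilde T_d^{ss}>0$ (Proposition~\ref{prop:properties}\ref{propitem:T-ss}) to conclude $t_{recov}(n)$ is bounded by a constant independent of $n$, and then pass to the limit in \eqref{eq:eta-n-cd} and \eqref{eq:eta-n-dc}. Your closed-form bound $t_{recov1}(n)=\frac{1}{\alpha+\gamma\Delta m_c}\log\bigl(1+\tilde T(n2t_p)/\lvert\tilde T_c^{ss}\rvert\bigr)$ is a correct quantitative refinement of a step the paper argues only qualitatively.
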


\begin{proof}[Proof of Theorem \ref{lem:rte-asymptotic}]
Consider the first possibility:
$\tilde{T}(n2t_p)\geq0$ so that additional charging is needed for some
time, and call that time $t_{recov}(n) \geq 0$. From~\eqref{eq:eta-n-cd}, we have
\begin{align*}
\lim_{n \to \infty}  \eta_\rte(n) = \lim_{n \to \infty} \frac{n t_p}{n t_p+t_{recov}(n)} =  \lim_{n \to \infty} \frac{t_p}{t_p+\frac{1}{n} t_{recov}(n)}.
\end{align*}
Since the building is undergoing charging for $t > n2t_p$, it follows from Proposition~\ref{prop:properties}\ref{propitem:T-ss} that the temperature deviation monotonically decays toward the value $\tilde T_c^{ss}$ from the ``initial value'' $\tilde{T}(n2t_p)$. Since $\tilde{T}(n2t_p)$ is bounded by a constant that is independent of $n$, which follows from Lemma~\ref{prop:T-bounded}, the time it takes for $\tilde{T}(t)$ to reach 0 from its ``initial value'' $\tilde{T}(n2t_p)$ is upper bounded by a constant independent of $n$,
which we denote by $\overline t_{recov}$. Thus, $t_{recov}(n) \leq
\bar{t}_{recov}$, and $\bar{t}_{recov}$ is a constant independent
of $n$. Therefore, $\lim_{n \to \infty }\frac{1}{n}t_{recov}(n) = 0$,
and therefore, $ \lim_{n  \to \infty} \eta_\rte(n) = 1$.  A similar
analysis holds for the \
second possibility: $\tilde{T}(n2t_p)\leq0$. In this case additional
discharging is needed for $t > n2t_p$. Again, the time it takes
for the temperature deviation to get back to $0$ is upper bounded by a
constant independent of $n$ since the ``initial condition''
$\tilde{T}(n2t_p)$ is upper bounded (in magnitude) by a constant
independent of $n$. Thus, again $\frac{1}{n} t_{recov}(n) \to 0$ as $n
\to \infty$, and therefore $\lim_{n \to \infty}  \eta_\rte(n)$  = $\lim_{n \to \infty} \frac{n  t_p+t_{recov}(n)}{n t_p} =1$.
\end{proof}

\section{Numerical verification}\label{sec:simulations}
 In order to show that the main result---Theorem~1---is robust to modeling assumptions made during analysis, we test the prediction using a more sophisticated model in simulations that includes humidity. The temperature dynamics are modeled as follows:
 \begin{align*}
 C_z\dot{T}(t) &=  \frac{1}{R_w}(T_{w}(t) - T(t)) + q_x(t)+q_\hvac(t) \\
 C_w\dot{T}_w(t) &= \frac{1}{R_z}(T_{oa}(t) - T_w(t)) + \frac{1}{R_w}(T(t) - T_w(t)) 
 \end{align*}
 where $T_w$ is the wall temperature, $C_z$ and $C_w$ are the thermal capacitance of the zone and the wall respectively, $R_z$ is the resistance to heat exchange between the outdoors and wall, and $R_w$ is the resistance to heat exchange between the wall and indoors. $q_\hvac$ is the heat influx due to the HVAC system which is given by \eqref{eq:q_ac}. The dynamics of zone humidity ratio $W$ is modeled as \cite{SG_PB_Energy:11}: 
 \begin{align*}
 \dot{W}(t) = \frac{R_gT(t)}{VP^{da}}\Bigg[\omega_x(t) + m_a(t)\frac{W_{sa}(t)-W(t)}{1+W_{sa}(t)}\Bigg]
 \end{align*}
 where $V$ is the volume of dry air (which is same as the zone volume), $R_g$ is the specific gas constant of dry air, $P^{da}$ is the partial pressure of dry air, $W_{sa}$ is the supply air humidity ratio, and $\omega_x$ is the rate of internal water vapor generation.  Models \eqref{eq:fanpower} and \eqref{eq:Pch-1} are used to compute the fan and the chiller power respectively. Chiller $COP$ is modeled as a linear function of $T_{oa}$: $COP(t) = 5.5 - 0.025T_{oa}(t)$, with $COP$ saturating at 4 for $T_{oa}\leq 60\degree F$ and 3 for $T_{oa}\geq 100 \degree F$. This model is an approximation of the single-speed electric DX (direct expansion) air cooling coil model from \cite{doe:2018energyplus}.

 The baseline power consumption is computed by performing a simulation with the climate control system. Then we perform the VES simulation with the square-wave power deviation reference added to the baseline power computed, which is provided as a power reference to the VES controller, as described in Section ~\ref{sec:model}. At the end of the ancillary service event the VES controller is turned off and the zone climate controller is turned on to bring the zone temperature to its set point.\nsr{Is this explanation good? Also is we decide on using the name VES+climate controller in the review response, we need to change it here.} Figure~\ref{fig:rte-vs-nofcycles} shows numerically computed values of $\eta_\rte(n)$ as a function of $n$. The RTE was computed using \eqref{eq:rte-def-vb}. The result presented in the figure is consistent with the prediction of Theorem~\ref{lem:rte-asymptotic} that the RTE tends to $1$  as $n \to \infty$.
\begin{figure}[htpb]
	\centering
	\begin{subfigure}{.5\textwidth}
		\centering
		\includegraphics[width=0.8\linewidth]{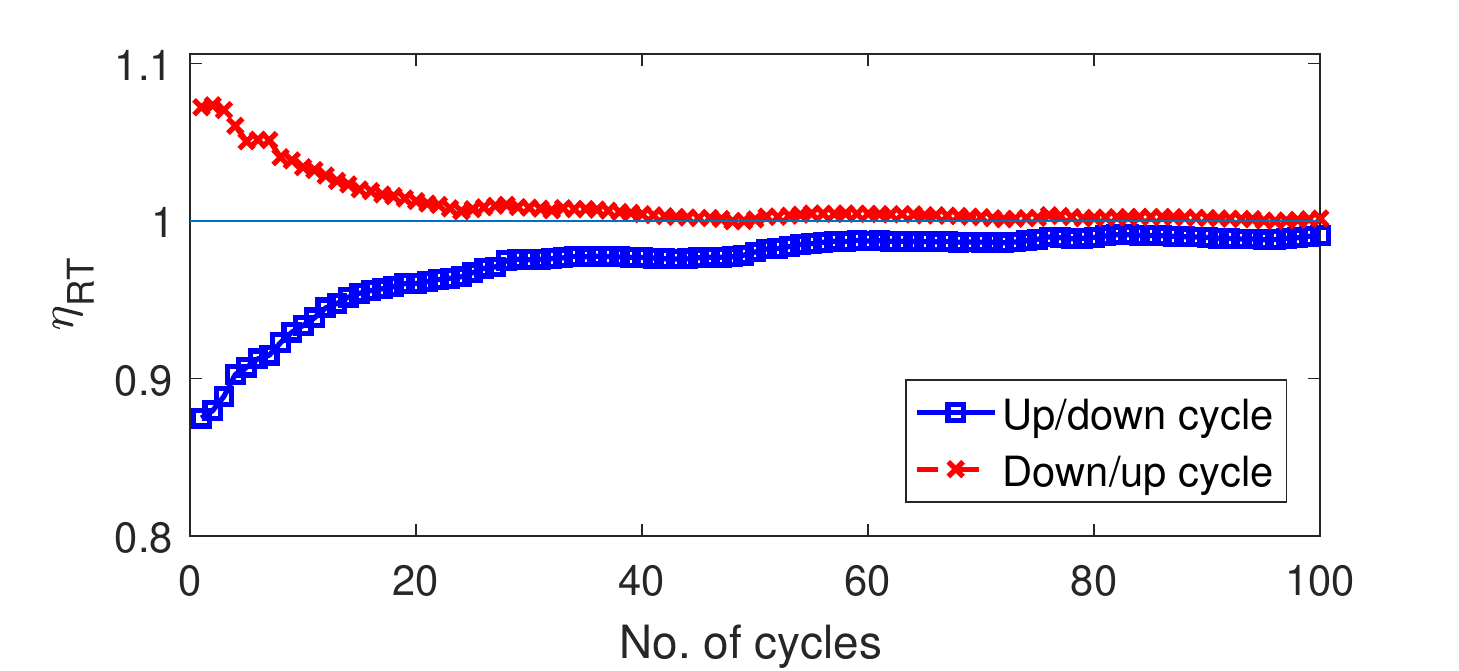}
		\caption{ $\eta_\rte(n)$ vs. $n$, when the  virtual battery tracks
			a square wave power reference ($\Delta P=4500$W, time period $=1$ hour).}
		\label{fig:rte-vs-nofcycles}
	\end{subfigure}
	\begin{subfigure}{.5\textwidth}
		\centering
		\includegraphics[width=0.9\linewidth]{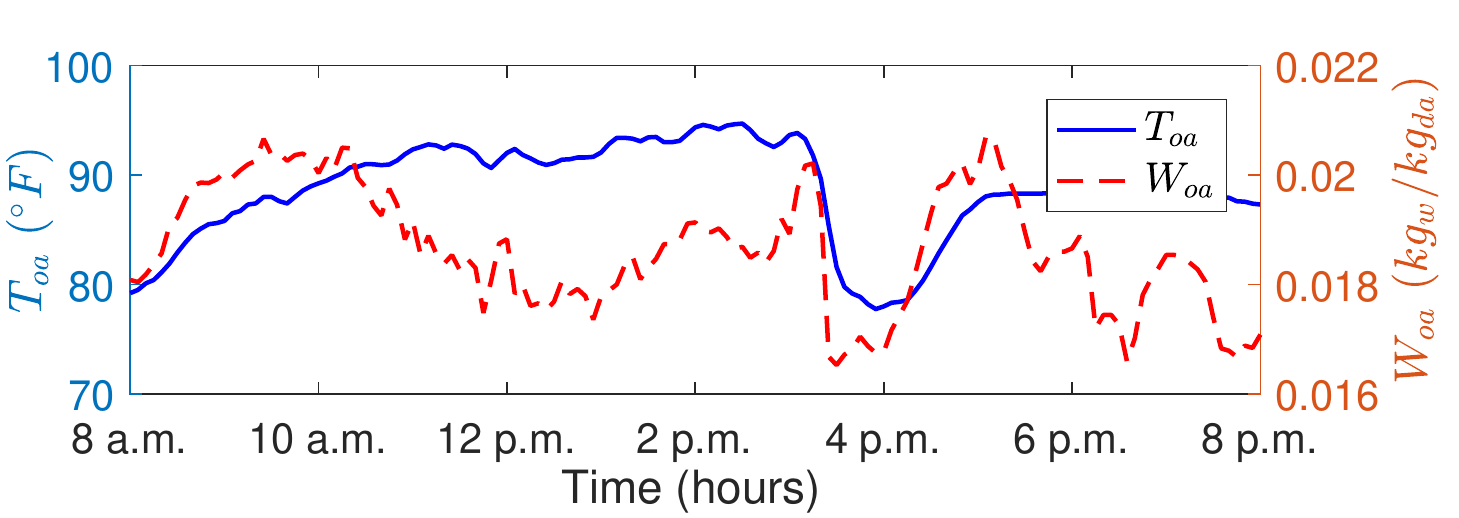}
		\caption{Outside air temperature and humidity ratio used in simulations.}
		\label{fig:Toa_Woa_dayag_plotyy}
	\end{subfigure}
	\caption{Robustness to modeling assumptions.}
	\label{fig:sim_results}
\end{figure}

\section{Conclusion}\label{sec:conclusion}
The main result of the paper is that the asymptotic RTE is unity. It is therefore better for an HVAC-based VES system to be used continuously for a long time than occasionally. The latter can cause low round trip efficiency, while the former has an efficiency close to $1$.

There are several additional avenues for further exploration. The discrepancy between our predictions and results in~\cite{beihisbac:2015} for the ``single demand-response event'' calls for further studies; cf. Comment \ref{com:single-cycle}.  Although the indoor temperature deviation is small (in the sub-1\degree F range) for the range of power deviations examined in our numerical simulations (5\%-30\%), the deviation is not zero mean. This can be interpreted as a slight warming---or cooling---of the building due to VES operation. Examination of the RTE, when the average temperature variation from the baseline is constrained to be 0, is ongoing and some preliminary progress in this direction has been reported in \cite{RamanAnalysisHPB:2018}.

\section*{Acknowledgment}
Prabir Barooah thanks Scott Backhaus for stimulating discussions regarding RTE during a visit to LANL in 2014, and Naren Srivaths Raman thanks Jonathan Brooks for helpful discussions. 


\appendix
\label{app:proofs}

\ifArxivVersion
We start with a technical result first.
\begin{proposition}\label{prop:technical}
	\begin{sublist}
		\item\label{propitem:a-pos} The parameter $a$ defined in \eqref{eq:deltaP_nonlinear_a} is
		positive for every positive $m_a$.
		\item\label{propitem:a-sqr} If $r_{oa}=1$, then $a^2>4d\Delta{P}$ for any feasible $\Delta P$.
		\item\label{propitem:a-d} If $r_{oa}=1$, then $\frac{a}{d}>m_a^\b$.
		\item\label{propitem:m-a} If $r_{oa}=1$ and $\Delta P \leq P_\hvac^\b$, then $m_a^\b \leq \dfrac{1}{2d}(a + \sqrt{a^2-4d\Delta P})$, with equality only if $\Delta P = P_\hvac^\b$.
	\end{sublist}
\end{proposition}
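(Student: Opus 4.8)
The plan is to reduce all four parts to a single algebraic identity. When $r_{oa}=1$ the cross and temperature coefficients vanish ($b=c=0$ in~\eqref{eq:deltaP_nonlinear_bcd}), so~\eqref{eq:P-hvac-final} collapses to a pure quadratic in the airflow. Using $d=\alpha_{1f}$ and the definition~\eqref{eq:deltaP_nonlinear_a} of $a$ to absorb $\alpha_{2f}+C_{pa}(T_{oa}-T_{sa})/COP$ into $a-2dm_a^\b$, I would write $P_\hvac(m_a)=dm_a^2+(a-2dm_a^\b)m_a$ and evaluate at $m_a=m_a^\b$ to obtain the identity that drives everything:
\begin{align}
P_\hvac^\b = a\,m_a^\b - d(m_a^\b)^2, \qquad\text{equivalently}\qquad a-dm_a^\b=P_\hvac^\b/m_a^\b. \notag
\end{align}
I would establish this first and refer back to it.

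For part (a) I would not use $r_{oa}=1$. Decomposing $a=\alpha_{1f}m_a^\b+(\alpha_{1f}m_a^\b+\alpha_{2f})+C_{pa}[r_{oa}T_{oa}+(1-r_{oa})T^\b-T_{sa}]/COP$, the first term is positive, the second is nonnegative because the fan power $P_f(m_a^\b)=m_a^\b(\alpha_{1f}m_a^\b+\alpha_{2f})$ is nonnegative on the operating range (cf.\ the remark after~\eqref{eq:fanpower}) with $m_a^\b>0$, and the third is positive since $r_{oa}T_{oa}+(1-r_{oa})T^\b$ is a convex combination of two quantities each exceeding $T_{sa}$ (Assumption~\ref{as:basic} gives $T_{oa}>T_H>T^\b>T_{sa}$). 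Hence $a>0$.

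Parts (b) and (c) then follow from the identity. For (c), $a-dm_a^\b=P_\hvac^\b/m_a^\b>0$ immediately, since the baseline airflow and baseline consumption (fan plus strictly positive chiller power) are positive; dividing by $d>0$ gives $a/d>m_a^\b$. For (b), strict feasibility $\Delta P<P_\hvac^\b$ together with $d>0$ gives $a^2-4d\Delta P>a^2-4dP_\hvac^\b=(a-2dm_a^\b)^2\ge0$, so $a^2>4d\Delta P$.

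The real work is part (d). I would reduce the claim to $2dm_a^\b-a\le\sqrt{a^2-4d\Delta P}$ and split on the sign of the left-hand side: if it is nonpositive the inequality is immediate (and strict when negative, already excluding equality); if it is positive, both sides are nonnegative so squaring is reversible, and expanding then dividing by $4d>0$ collapses the squared inequality \emph{exactly} to $d(m_a^\b)^2-am_a^\b+\Delta P\le0$, i.e.\ $\Delta P\le P_\hvac^\b$ by the identity---the hypothesis. The same chain pins the equality case: equality forces $\sqrt{a^2-4d\Delta P}=2dm_a^\b-a$, which the reduction shows can hold only when $\Delta P=P_\hvac^\b$. The main obstacle is precisely this bookkeeping in (d)---ensuring squaring is an equivalence via the sign split and threading the equality condition through both branches; a secondary subtlety is that positivity of $a$ in (a) rests on the operational nonnegativity of $P_f$ rather than on any sign assumption on $\alpha_{2f}$, so I would invoke that fact explicitly.
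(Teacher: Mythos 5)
Your proof is correct, and its organization is genuinely different from the paper's. The paper proves (b)--(d) by substituting the raw coefficient expressions \eqref{eq:deltaP_nonlinear_a}--\eqref{eq:deltaP_nonlinear_bcd} and expanding term by term; the only place it uses your central identity is part (a), where it observes $a m_a^\b = \alpha_{1f}(m_a^\b)^2 + P_\hvac^\b$ --- exactly your $P_\hvac^\b = a m_a^\b - d(m_a^\b)^2$ rearranged --- and concludes $a>0$ from positivity of baseline power (your term-by-term decomposition using $P_f(m_a^\b)\ge 0$ on the operating range is an equally valid, slightly more self-contained alternative that also avoids any sign assumption on $\alpha_{2f}$, as you note). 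Promoting that identity to the backbone of all four parts buys you two things. First, in (b), the consequence $a^2 - 4dP_\hvac^\b = (a-2dm_a^\b)^2$ makes the chain $a^2 - 4d\Delta P > (a - 2dm_a^\b)^2 \ge 0$ for $\Delta P < P_\hvac^\b$ immediate, and in fact repairs a small slip in the paper, which reduces the claim at the boundary value $\Delta P = P_\hvac^\b$ to $\bigl[\alpha_{2f} + C_{pa}(T_{oa}-T_{sa})/COP\bigr]^2 > 0$ and calls that ``always true'' --- it is only $\ge 0$, failing in the degenerate case $\alpha_{2f} + C_{pa}(T_{oa}-T_{sa})/COP = 0$; under strict feasibility your chain is airtight. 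Second, in (d), the paper evaluates the expression at $\Delta P = P_\hvac^\b$, splits on the sign of $\alpha_{2f} + C_{pa}(T_{oa}-T_{sa})/COP$ (omitting the $=0$ case), and then invokes monotonicity of $\sqrt{a^2-4d\Delta P}$ in $\Delta P$; your reduction of $2dm_a^\b - a \le \sqrt{a^2 - 4d\Delta P}$ via sign split and reversible squaring to $d(m_a^\b)^2 - a m_a^\b + \Delta P \le 0$, i.e.\ exactly $\Delta P \le P_\hvac^\b$ by the identity, is an equivalence, so it delivers the inequality and the equality clause in one stroke, including the degenerate branch $2dm_a^\b - a = 0$, where equality forces $a^2 = 4d\Delta P$ and hence $\Delta P = P_\hvac^\b$. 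The trade-off is only stylistic: the paper's computations are mechanical and checkable line by line, while your route asks the reader to track the sign split and the well-definedness of the square root for $\Delta P \le P_\hvac^\b$ (which the identity itself supplies, since $a^2 - 4d\Delta P \ge (a-2dm_a^\b)^2 \ge 0$ there).
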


\begin{proof}[Proof of Proposition \ref{prop:technical}]
	\begin{sublist}
		\item It follows from ~\eqref{eq:deltaP_nonlinear_a} that
		\begin{align*}
		a m_a^\b = \alpha_{1f}(m_a^\b)^2+P_\hvac^\b.
		\end{align*}
		Since the right hand side is positive and $m_a^\b>0$, we have that $a>0$.
		\item  For $r_{oa} =1$ it follows from~\eqref{eq:deltaP_nonlinear_a} that:
		\begin{linenomath*}
			\begin{align*}
			a = 2\alpha_{1f}m_a^\b +\alpha_{2f} +\frac{C_{pa}[T_{oa}-T_{sa}]}{COP}.
			\end{align*}
		\end{linenomath*}
		The maximum value that $4d\Delta P$ can take is when $\Delta P =P_\hvac^\b$. Substituting for $\Delta P=P_\hvac^\b$ and from~\eqref{eq:deltaP_nonlinear_bcd} we get:
		\begin{linenomath*}
			\begin{align*}
			4\alpha_{1f}P_\hvac^\b 
			= 4\alpha_{1f}\Bigg[\alpha_{1f}&(m_a^\b)^2+\alpha_{2f}m_a^\b \\
			&+\frac{m_a^\b C_{pa}[T_{oa}-T_{sa}]}{COP}\Bigg].
			\end{align*}
		\end{linenomath*}
		So we need to prove that:
		\begin{linenomath*}
			\begin{align*}
			&\Bigg[2\alpha_{1f}m_a^\b +\alpha_{2f} +\frac{C_{pa}[T_{oa}-T_{sa}]}{COP}\Bigg]^2 > \\
			&~~4\alpha_{1f}\Bigg[\alpha_{1f}(m_a^\b)^2+\alpha_{2f}m_a^\b +\frac{m_a^\b C_{pa}[T_{oa}-T_{sa}]}{COP}\Bigg].
			\end{align*}
		\end{linenomath*}
		This simplifies to
		\begin{linenomath*}
			\begin{align*}
			\Bigg[\alpha_{2f} + \frac{C_{pa}[T_{oa}-T_s]}{COP}\Bigg]^2>0
			\end{align*}
		\end{linenomath*}
		which is always true, and therefore $a^2>4d \Delta P$.
		\item It follows from~\eqref{eq:deltaP_nonlinear_a} and~\eqref{eq:deltaP_nonlinear_bcd} that when $r_{oa}=1$
		\begin{linenomath*}
			\begin{align*}
			\frac{a}{d}=\frac{2\alpha_{1f}m_a^\b +\alpha_{2f} +\frac{C_{pa}[T_{oa}-T_{sa}]}{COP}}{\alpha_{1f}}.
			\end{align*}
		\end{linenomath*}
		With further algebraic manipulation it reduces to $m_a^\b+P_\hvac^\b/(\alpha_{1f}m_a^\b)$. Since $\alpha_{1f}$, $m_a^\b$, and $P_\hvac^\b$ are positive, we have $a/d > m_a^\b$.
		
		\item Let us look at the following expression:
		\begin{align}\label{eq:prop3-proof4-1}
		\frac{a+\sqrt{a^2-4d\Delta P}}{2d}.
		\end{align}
		Substituting for $\Delta P = P_\hvac^\b$ in the above expression and using the expressions for $a$ and $d$ from ~\eqref{eq:deltaP_nonlinear_a} and~\eqref{eq:deltaP_nonlinear_bcd} respectively, we get:
		\begin{linenomath*}
			\begin{align}\label{eq:prop3-proof4-2}
			\frac{2\alpha_{1f}m_a^\b +\alpha_{2f} +\frac{C_{pa}[T_{oa}-T_{sa}]}{COP}+\abs{\alpha_{2f} +\frac{C_{pa}[T_{oa}-T_{sa}]}{COP}}}{2\alpha_{1f}}.
			\end{align}
		\end{linenomath*}
		If $\alpha_{2f} +\frac{C_{pa}[T_{oa}-T_{sa}]}{COP}>0$ then \eqref{eq:prop3-proof4-2} becomes:
		\begin{linenomath*}
			\begin{align*}
			m_a^\b +\frac{\alpha_{2f} +\frac{C_{pa}[T_{oa}-T_{sa}]}{COP}}{\alpha_{1f}},
			\end{align*}
		\end{linenomath*}
		which is greater than $m_a^\b$ since $\alpha_{1f}>0$. If $\alpha_{2f} +\frac{C_{pa}[T_{oa}-T_{sa}]}{COP}<0$, then \eqref{eq:prop3-proof4-2} becomes equal to $m_a^\b$. In \eqref{eq:prop3-proof4-1} for any $\Delta P < P_\hvac^\b$ the value of \eqref{eq:prop3-proof4-1} increases and therefore is greater than $m_a^\b$. This completes the proof.
	\end{sublist}
\end{proof}

\begin{proof}[Proof of Proposition \ref{prop:properties}]
	\begin{sublist}
		\item Since $r_{oa}=1$ from \eqref{eq:deltaP_nonlinear_bcd}, $b=0$ and $c=0$. Therefore the solution for $\tilde m_a$ as a function of $\tilde P$ from \eqref{eq:deltaP_nonlinear} reduces to:
		\begin{linenomath*}
			\begin{align*}
			\tilde m_a = \frac{-a \pm \sqrt{a^2 + 4d\tilde P}}{2d}.
			\end{align*}
		\end{linenomath*}
		During charging $\tilde P = \Delta P$, so the the two roots in the
		equation above are 
		$ \frac{-a + \sqrt{a^2 + 4d\Delta P}}{2d}$ and $\frac{-a - \sqrt{a^2 +
				4d\Delta P}}{2d}$. The second root is not possible, since it is
		negative with a minimum magnitude $\frac{+a+ \sqrt{a^2}}{2d}=\frac{a}{d}$, which is larger than $m_a^\b$ by Proposition~\ref{prop:technical}\ref{propitem:a-d},
		making the total airflow rate negative. Therefore during charging,
		the airflow rate is $\frac{-a + \sqrt{a^2 + 4d\Delta P}}{2d}$. This
		proves the first statement, regarding  $\Delta m_c$. During
		discharging, $\tilde P = -\Delta P$, so the two possible roots are $ \frac{-a + \sqrt{a^2 - 4d\Delta P}}{2d}$ and $\frac{-a - \sqrt{a^2 -
				4d\Delta P}}{2d}$. The second root is not possible, since it is negative with a minimum magnitude larger than $m_a^\b$ for $\Delta P<P_\hvac^\b$ from Proposition~\ref{prop:technical}\ref{propitem:m-a}, which will make the total air flow rate negative. Therefore during charging,
		the airflow rate is $\frac{-a + \sqrt{a^2 - 4d\Delta P}}{2d}$. This proves the second statement, regarding  $\Delta m_d$.
		
		To prove the inequality $\Delta m_d>\Delta m_c$, let $\nu \triangleq
		4d \Delta P$ for simplifying the notation. The inequality $\Delta
		m_d>\Delta m_c$ is equivalent to:
		\begin{linenomath*}
			\begin{align*}
			&\frac{-a + \sqrt{a^2 + \nu}}{2d} < \frac{a - \sqrt{a^2 - \nu}}{2d} \\
			&\Rightarrow -a + \sqrt{a^2 + \nu} < a - \sqrt{a^2 - \nu} \text{, as } d>0.
			\end{align*}
		\end{linenomath*}
		Further algebraic manipulation gives,
		\begin{linenomath*}
			\begin{align}\label{eq:proof_Ma}
			\sqrt{a^2 + \nu} - \sqrt{a^2 - \nu} > \frac{\nu}{a}.
			\end{align}
		\end{linenomath*}
		Since $a^2>\nu$ from Proposition~\ref{prop:technical}\ref{propitem:a-sqr}, let us define $a^2=\nu + \epsilon$ where $\epsilon>0$. Therefore, \eqref{eq:proof_Ma} becomes:
		\begin{linenomath*}
			\begin{align*}
			&\sqrt{2\nu + \epsilon} - \sqrt{\epsilon} > \frac{\nu}{\sqrt{\nu + \epsilon}} \\
			&\Rightarrow \sqrt{(2\nu + \epsilon)(\nu + \epsilon)} > \nu + \sqrt{\epsilon(\nu + \epsilon)}.
			\end{align*}
		\end{linenomath*}
		Squaring on both sides yields:
		\begin{linenomath*}
			\begin{align*}
			\Rightarrow \nu^2 + 2\nu\epsilon > 2\nu\sqrt{\epsilon(\nu+\epsilon)},
			\end{align*}
		\end{linenomath*}
		squaring again on both sides and simplifying, we get: $\nu^4>0$, which
		is true, and therefore $\Delta m_{c}<\Delta m_{d}$.
		\item Note that the maximum value that $\Delta m_{d}$ can take is
		$m_a^\b$; otherwise, the total airflow rate will be negative. For that value of $\Delta m_d$, $\gamma \Delta m_{d} =
		\frac{C_{pa}m_a^\b}{C}$ (as $\gamma = C_{pa}/C$). Substituting for $\alpha$ from \eqref{eq:def-alpha-beta-gamma} and since $R,C>0$, we have
		$\frac{C_{pa}m_a^\b}{C} + \frac{1}{RC} > \frac{C_{pa}m_a^\b}{C}$ so that
		$\alpha > \gamma \Delta m_d$. For the second inequality, note that from Proposition\ref{prop:properties}\ref{propitem:m-c-d} $\Delta
		m_c <\Delta
		m_d$. Since $\gamma$ is positive, $\gamma \Delta m_c<\gamma \Delta m_d$. Therefore, $\alpha>\gamma \Delta m_d>\gamma\Delta m_c$.
		
		\item We have already proved above that, $\tilde m_a(t)
		\equiv \Delta m_c$ when charging and $\tilde m_a(t)
		\equiv -\Delta m_d$ when discharging. It follows from
		\eqref{eq:deltaT-dynamics}  that the temperature dynamics reduce in the
		charging scenario to \vspace{-5pt}
		\begin{linenomath*}
			\begin{align}\label{eq:deltaT_c-dynamics}
			\dot{\tilde {T}}(t)=-(\alpha+\gamma \Delta m_{c})\tilde T-\beta\Delta m_{c},
			\end{align}
		\end{linenomath*}\vspace{-5pt}
		and in the discharging scenario to
		\begin{linenomath*}
			\begin{align}\label{eq:deltaT_d-dynamics}
			\dot{\tilde {T}}(t)=-(\alpha-\gamma \Delta m_{d})\tilde T+\beta\Delta m_{d}.
			\end{align}
		\end{linenomath*}
		Both of these are linear time invariant systems driven by constant
		inputs that are asymptotically stable; stability follows from
		$\alpha>\gamma \Delta m_{d}$, which was proved above and $\alpha$, $\gamma$, and $\Delta m_c$ being positive.
		It follows from elementary linear systems
		analysis~\cite{KwakernaakSivan:72} that $\tilde{T}(t)$ converges to a
		constant steady-state value 
		irrespective of the initial condition, which is, in the charging scenario:
		$\tilde T_c^{ss}=\frac{\beta \Delta m_{c}}{-(\alpha+\gamma \Delta m_{c})}$,
		and in the discharging scenario:
		$\tilde T_d^{ss}=\frac{\beta \Delta m_d}{(\alpha-\gamma \Delta m_d)}$.
		Since $\alpha$, $\beta$, $\gamma$, $\Delta m_{c}$ and $\Delta m_{d}$ are all
		positive, $\tilde T_c^{ss}<0$, and the fact that $\tilde T_d^{ss}>0$ follows
		from Proposition~\ref{prop:properties}\ref{propitem:alpha-inequality},
		proved above. \ \\
		For the second part of the statement, we need to prove that 
		\begin{linenomath*}
			\begin{align}\label{eq:ss-TcTd-ineq-desired}
			\frac{\beta \Delta m_c}{(\alpha+\gamma \Delta m_c)}<\frac{\beta \Delta m_d}{(\alpha-\gamma \Delta m_d)}.
			\end{align}
		\end{linenomath*}
		It follows from Proposition \ref{prop:properties}\ref{propitem:alpha-inequality} that $\alpha+\gamma \Delta m_c
		> \alpha-\gamma \Delta m_c>0$. Therefore, and since all relevant
		parameters are positive,
		\begin{linenomath*}
			\begin{align}
			\frac{\beta \Delta m_c}{\alpha+\gamma \Delta m_c} &< \frac{\beta \Delta m_c}{\alpha-\gamma \Delta m_c}. \label{eq:prop3-proof-1}
			\end{align}
		\end{linenomath*}
		Again from Proposition
		\ref{prop:properties}\ref{propitem:alpha-inequality}, we get
		\begin{linenomath*}
			\begin{align}
			\frac{\beta \Delta m_c}{\alpha+\gamma \Delta m_c} & < \frac{\beta \Delta m_c}{\alpha-\gamma
				\Delta m_c} < \frac{\beta \Delta m_d}{\alpha-\gamma \Delta m_d},\label{eq:prop3-proof-2}
			\end{align}
		\end{linenomath*}
		where the second inequality follows from $\Delta m_{c}<\Delta m_{d}$.
		Thus from \eqref{eq:prop3-proof-2} we get
		the desired inequality \eqref{eq:ss-TcTd-ineq-desired}, which proves
		the second statement.
	\end{sublist}
\end{proof}
\fi
\ifArxivVersion
Now we are ready to prove Lemma \ref{lem:rte-singlecycle}. \fi
\begin{proof}[Proof of Lemma~\ref{lem:rte-singlecycle}]
	\ifArxivVersion
	Since $r_{oa}=1$, recall that we established in the proof of Proposition~\ref{prop:properties}\ref{propitem:T-ss} that $\tilde{T}(t)$ is governed by two asymptotically stable linear time invariant systems, with step inputs, \eqref{eq:deltaT_c-dynamics} and \eqref{eq:deltaT_d-dynamics}, during the charging and discharging half-periods respectively. Consider first the up/down scenario, with initial condition $\tilde{T}(0) = 0$. By solving the two differential equations~\eqref{eq:deltaT_c-dynamics}-\eqref{eq:deltaT_d-dynamics}, we obtain the temperature deviation at the end of one period of the square wave:
	\else
	Since $r_{oa}=1$, $\tilde{T}(t)$ dynamics is governed by two asymptotically stable linear time invariant systems, with step inputs as shown below: 
	\begin{align}
	\dot{\tilde {T}}(t)&=-(\alpha+\gamma \Delta m_{c})\tilde T-\beta\Delta m_{c} \text{ (charging)}, \label{eq:deltaT_c-dynamics-jrnl}\\
	\dot{\tilde {T}}(t)&=-(\alpha-\gamma \Delta m_{d})\tilde T+\beta\Delta m_{d} \text{ (discharging)} \label{eq:deltaT_d-dynamics-jrnl}.
	\end{align}
	The derivation can be found in the proof of Proposition~\ref{prop:properties}\ref{propitem:T-ss} in \cite{raman:2018round}; we omit the details here. Consider first the up/down scenario, with initial condition $\tilde{T}(0) = 0$. By solving the two differential equations mentioned above, we obtain the temperature deviation at the end of one period of the square wave:
	\fi
	\begin{align*}
	\tilde{T}(2t_p) &= \frac{-\beta \Delta m_{c}e^{-(\alpha - \gamma \Delta m_d)t_p}(1-e^{-(\alpha+\gamma \Delta m_c)t_p})}{\alpha + \gamma \Delta m_c} \\
	& \quad + \frac{\beta \Delta m_d(1-e^{-(\alpha-\gamma \Delta m_d)t_p})}{\alpha - \gamma \Delta m_d}.
	\end{align*}
	By hypothesis, $(\alpha -\gamma \Delta m_d)t_p<(\alpha + \gamma \Delta m_c)t_p\ll1$, so we can use a
	first order Taylor expansion to get the following approximation:
	\begin{align}
	\label{eq:T2tp_updown_approx}
	\tilde{T}(2t_p) \approx t_p\beta \left(  \Delta m_d - \Delta m_c
	e^{-(\alpha-\gamma \Delta m_d)t_p} \right).
	\end{align}
	Since $\Delta m_c < \Delta m_d$ and $(\alpha-\gamma \Delta m_d)>0$, we get $\tilde{T}(2t_p)
	>0$. This is possibility 1 shown in
	Figure~\ref{fig:charge-then-discharge-cases-combined}: $t_d=t_p$, while $t_c
	=t_p+t_{recov}$ for some $t_{recov}>0$. It follows from
	\eqref{eq:rte-simple} that $\eta_\rte<1$. Consider second the down/up scenario, with initial condition $\tilde{T}(0)
	= 0$. The corresponding expression becomes:                                                                      
	\begin{align*}
	\tilde{T}(2t_p) &= \frac{\beta \Delta m_d e^{-(\alpha + \gamma \Delta m_c)t_p}(1-e^{-(\alpha-\gamma \Delta m_d)t_p})}{\alpha - \gamma \Delta m_d} \\
	& \quad - \frac{\beta \Delta m_c(1-e^{-(\alpha+\gamma\Delta m_c)t_p})}{\alpha + \gamma \Delta m_c}.
	\end{align*}
	A similar approximation gives:
	\begin{align}
	\label{eq:T2tp_downup_approx}
	\tilde{T}(2t_p) \approx t_p\beta \left(   e^{-(\alpha+\gamma \Delta m_c)t_p} \Delta m_d - \Delta m_c
	\right).
	\end{align}
	As long as $t_p>t_p^*$, we have $e^{-(\alpha+\gamma \Delta m_c)t_p} <  \Delta m_c/\Delta m_d$, and thus $\tilde{T}(2t_p) <0$. 
	\ifArxivVersion
	This is possibility 2 shown in Figure~\ref{fig:discharge-then-charge-cases-combined}:
	\else
	In this case we have
	\fi
	$t_c=t_p$, while $t_d=t_p+t_{recov}$ for some $t_{recov}>0$. It now follows from \eqref{eq:rte-simple} that $\eta_\rte>1$. However, if $t_p< t_p^*$, then $e^{-(\alpha+\gamma \Delta
		m_c)t_p} >  \Delta m_c/\Delta m_d$, and we have $\tilde{T}(2t_p) > 0$.
	\ifArxivVersion
	This is possibility 1 shown in Figure~\ref{fig:discharge-then-charge-cases-combined}:
	\else
	In this case we have
	\fi
	 $t_d=t_p$, while $t_c =
	t_p+t_{recov}$ for some $t_{recov}>0$, and it follows from
	\eqref{eq:rte-simple} that $\eta_\rte<1$.
\end{proof}

\begin{proof}[Proof of Lemma \ref{prop:T-bounded}]
	\ifArxivVersion
	Recall that we established in the proof of Proposition~\ref{prop:properties}\ref{propitem:T-ss} that $\tilde{T}(t)$ is governed by two asymptotically stable, linear, time invariant systems, with step inputs, \eqref{eq:deltaT_c-dynamics} and \eqref{eq:deltaT_d-dynamics}, during the charging and discharging half-periods respectively.
	\else
	As mentioned in the proof of Lemma~\ref{lem:rte-singlecycle}, $\tilde{T}(t)$ is governed by two asymptotically stable, linear, time invariant systems, with step inputs, \eqref{eq:deltaT_c-dynamics-jrnl} and \eqref{eq:deltaT_d-dynamics-jrnl}, during the charging and discharging half-periods respectively.
	\fi
	It follows from elementary linear systems theory that the step response of a stable first-order LTI system monotonically increases (or decreases, depending on the initial condition) towards the steady-state value. Therefore, in the time interval $[0, t_p]$, the maximum value of $|\tilde T(t)|$ will be (depending on whether the system is charging or discharging) 
	\begin{linenomath*}
		\begin{align}\label{eq:Tmax-02tp}
		\lvert\tilde T(t)\rvert &\leq max\Big\{ \lvert\tilde
		T(0)\rvert,|\tilde T_c^{ss}|,|\tilde
		T_d^{ss}| \Big\},\quad \forall t \in [0,t_p].
		\end{align}
	\end{linenomath*}
	\ifArxivVersion
	The value of $\tilde{T}(t_p)$ will serve as the initial condition to the LTI dynamics that govern $\tilde{T}(t)$ during the interval $[t_p,	2t_p]$, which is either~\eqref{eq:deltaT_c-dynamics} or	\eqref{eq:deltaT_d-dynamics}. 
	\else
	The value of $\tilde{T}(t_p)$ will serve as the initial condition to the LTI dynamics that govern $\tilde{T}(t)$ during the interval $[t_p,	2t_p]$, which is either~\eqref{eq:deltaT_c-dynamics-jrnl} or	\eqref{eq:deltaT_d-dynamics-jrnl}.
	\fi
	Using the same argument, we see that the maximum value of $|\tilde T(t)|$ in this time interval will satisfy
	\begin{linenomath*}
		\begin{align*}
		\lvert\tilde T(t)\rvert &\leq max\Big\{ \lvert\tilde T(t_p)\rvert,\lvert\tilde T_c^{ss}\rvert, |\tilde T_d^{ss}|\Big\}, \quad \forall t\in [t_p,2t_p], \\
		& \leq max\Big\{ \lvert\tilde
		T(0)\rvert,\lvert\tilde
		T_c^{ss}\rvert, |\tilde T_d^{ss}|
		\Big\}, \quad \forall t\in [0,t_p], 
		\end{align*}
	\end{linenomath*}
	where the second inequality follows from combining the first inequality with \eqref{eq:Tmax-02tp}. Since $\tilde T(2t_p)$ serves as the initial condition for the second period $[2t_p,4t_p]$ and so on, we can repeat this argument ad infinitum, and arrive at the conclusion that $|\tilde T(t)|$, for any $t\geq0$, is bounded by the constants $\lvert\tilde T(0)\rvert,\lvert\tilde T_c^{ss}\rvert, \text{ and } |\tilde T_d^{ss}| $. Since $\tilde T(0)=0$, $max\Big\{ \lvert\tilde T(0)\rvert,|\tilde T_c^{ss}|,|\tilde	T_d^{ss}| \Big\}=max\Big\{|\tilde T_c^{ss}|,|\tilde	T_d^{ss}| \Big\}$. Therefore, $|\tilde T(t)|$, for any $t\geq0$, is bounded by the constants $\lvert\tilde T_c^{ss}\rvert, |\tilde T_d^{ss}| $, which proves the statement.
\end{proof}

\end{document}